\newtheorem{theorem}{Theorem}
\newtheorem{lemma}[theorem]{Lemma}
\newtheorem{definition}[theorem]{Definition}
\newcommand{\sket}[1]{{\ensuremath{\lvert#1\rangle}}}
\newcommand{\lket}[1]{{\ensuremath{\left\lvert#1\right\rangle}}}
\newcommand{\ket}[1]{\mathchoice{\lket{#1}}{\sket{#1}}{\sket{#1}}{\sket{#1}}}
\newcommand{\sbra}[1]{{\ensuremath{\langle#1\rvert}}}
\newcommand{\lbra}[1]{{\ensuremath{\left\langle#1\right\rvert}}}
\newcommand{\bra}[1]{\mathchoice{\lbra{#1}}{\sbra{#1}}{\sbra{#1}}{\sbra{#1}}}
\newcommand{\ident}{\mathbbm{1}}
\DeclareMathOperator{\tr}{Tr}
\newcommand{\mbE}{\mathbb{E}}
\renewcommand{\otimes}{\varotimes}
\newcommand{\htwo}{\tilde{H}^{\downarrow}_2}
\newcommand{\hhalf}{H^{\uparrow}_{\frac{1}{2}}}
\newcommand{\eqdef}{:=}
\begin{document}
\cleanlookdateon
\title{Quantum Polarization of Qudit Channels}
\author{Ashutosh Goswami${}^{1}$ \qquad Mehdi Mhalla${}^{2}$ \qquad Valentin Savin${}^3$\\[2mm]
{\small ${}^1$ Univ. Grenoble Alpes, Grenoble INP, LIG, F-38000 Grenoble, France}\\ 
{\small ${}^2$ Univ. Grenoble Alpes, CNRS, Grenoble INP, LIG, F-38000 Grenoble, France}\\
{\small ${}^3$ Univ. Grenoble Alpes, CEA, LETI, F-38054 Grenoble, France}
}

\date{}

\maketitle

\begin{abstract} 
We provide a generalization of quantum polar codes to quantum channels with qudit-input, achieving the symmetric coherent information of the channel. Our \break scheme relies on  a channel combining and splitting construction, where a two-qudit unitary randomly chosen from a unitary 2-design is used to combine two instances of a qudit-input channel. The inputs to the synthesized bad channels are frozen by sharing EPR pairs between the sender and the receiver, so our scheme is entanglement assisted. Using the fact that the generalized two-qudit Clifford group forms a unitary 2-design, we conclude that the channel combining operation can be chosen from this set. Moreover, we show that polarization also happens for a much smaller subset of two-qudit Cliffords, which is not a unitary 2-design. Finally, we show how to decode the proposed quantum polar codes on Pauli qudit channels.
\end{abstract}

\section{Introduction}

In classical information theory, polar codes are the first explicit construction provably achieving the symmetric capacity of any discrete memoryless channel \cite{arikan09}. The construction is based on the recursive application of a channel combining and splitting procedure. It first combines two instances of the transmission channel, using a controlled-NOT  gate as channel combiner, and then splits the combined channel into two virtual channels, referred to as good and bad channels. Applied recursively $n$ times, the above procedure yields $N= 2^n$  virtual channels. These virtual channels exhibit a polarization property, in the sense that they tend to become either completely noisy or noiseless, as $N$ goes to infinity. 
Polar coding consists of efficient encoding and decoding algorithms that take effective advantage of the channel polarization property. 

Polar codes have been generalized to classical-quantum channels with binary and non-binary classical input in~\cite{wg13-2, nr18}. For the transmission of quantum information over quantum channels with qubit-input, two  approaches have been considered in the literature. The first approach is based on CSS-like constructions, which essentially exploit polarization in either amplitude or phase basis~\cite{rdr11,wg13,rw12}. The second approach relies on a  {\em purely quantum polarization} construction~\cite{our-itw-paper19, DGMS19}, where the synthesized virtual channels tend to become either completely noisy or noiseless as quantum channels, not merely in one basis. This approach uses a randomized channel combining, employing a random two-qubit Clifford unitary as channel combiner.

In this work, we extend the work in~\cite{our-itw-paper19} to the case of quantum channels with qudit-input. To the best of our knowledge, this is the first generalization of polar codes to qudit-input channels. First, we show that purely quantum polarization (in the sense of~\cite{our-itw-paper19}) happens for any qudit-input quantum channel, using as channel combiner a random two-qudit unitary, chosen from a unitary 2-design. Further, we provide a simple proof of the fact that the generalized two-qudit Clifford group forms a unitary 2-design, therefore the channel combining operation can be randomly chosen from this set. Moreover, when the qudit dimension $d$ is a prime, we show that polarization happens for a subset of two-qudit Clifford unitaries containing only $d^4 + d^2 - 2$ elements, which is not a unitary 2-design. Hence, unitary 2-designs are not necessary for the quantum polarization of qudit-input  channels. 


To exploit the above polarization property, the inputs to the synthesized noisy channels are frozen by presharing EPR pairs between the sender and the receiver. Hence, our polar coding scheme is entanglement assisted. Finally, we consider the case of Pauli qudit channels. Similarly to~\cite{our-itw-paper19}, we associate a {\em classical counterpart channel} to a Pauli qudit channel. Then, we show that a quantum polar code on a Pauli qudit channel yields a classical polar code on the classical counterpart channel. Hence, we show that Pauli errors can be identified by decoding the polar code on the classical counterpart channel, using classical polar decoding.

The paper is organized as follows. Section~\ref{sec:prelims} provides the basic definitions needed for quantum polarization. Section~\ref{sec:quantum-polarization} contains our main polarization results for qudit-input quantum channels. Section~\ref{sec:qudit_Pauli} discusses the decoding of our quantum polar codes on Pauli qudit channels. 

\section{Preliminaries}
\label{sec:prelims}

We consider $d$-dimensional quantum systems, referred to as qudits, where $d\geq 2$ is fixed throughout the paper. We  denote by $\rho_A$ a quantum state ({\em i.e.}, density matrix) of a quantum system $A$. When no confusion is possible, we shall discard the quantum system from the notation. For a bipartite quantum state $\rho_{AB}$, we shall denote by $\rho_B := \tr_A(\rho_{AB})$  the quantum state of the system $B$, obtained by tracing out the system $A$. The identity matrix is denoted by either $\ident$ or $I$, with the former notation  used for quantum states, and the latter for quantum operators. Throughout the paper, logarithm is taken in base $d$. 
\begin{definition}[von Neumann entropy]
\label{def:von-Neumann-entropy}
(a) The von Neumann entropy of a quantum state $\rho$ is defined as
$$H(\rho) := -\tr\left(\rho \log \rho \right).$$
(b) The conditional von Neumann entropy of a bipartite quantum state $\rho_{AB}$ is defined as
$$H(A|B)_{\rho_{AB}} = H(\rho_{AB}) - H(\rho_B).$$
\end{definition}
\begin{definition}[Conditional sandwiched R\'enyi entropy of order 2]
\label{def:Rényi-2-entropy}
    Let $\rho_{AB}$ be a quantum state. Then,
    $$\htwo(A|B)_{\rho} := -\log \tr\left[ \rho_B^{-\frac{1}{2}} \rho_{AB} \rho_B^{-\frac{1}{2}} \rho_{AB} \right].$$
\end{definition}
\begin{definition}[Petz-R\'enyi entropy of order $\frac{1}{2}$]
\label{def:Rényi-half-entropy}
    Let $\rho_{AB}$ be a quantum state. Then,
    $$\hhalf(A|B)_{\rho} := 2 \log \sup_{\sigma_B} \tr\left[ \rho_{AB}^{\frac{1}{2}} \sigma^{\frac{1}{2}}_B \right],$$
    where the supremum is taken over all quantum states $\sigma_B$. 
\end{definition}

\medskip We consider quantum channels $\mathcal{W}_{A' \rightarrow B}$, with qudit input system $A'$, and output system $B$ of arbitrary dimension. When no confusion is possible, we shall discard the channel input and output systems from the notation. An EPR pair on two-qudit systems $A$ and $A'$ is the quantum state $\Phi_{AA'} := \ket{\Phi_{AA'}}\bra{\Phi_{AA'}}$, with $\ket{\Phi_{AA'}} := \frac{1}{\sqrt{d}}\sum_{i=0}^{d-1}\ket{i}_A\ket{i}_{A'}$. Given a quantum channel $\mathcal{W}_{A' \rightarrow B}$, we denote by $\mathcal{W}(\Phi_{AA'}) := (I_A\otimes \mathcal{W})(\Phi_{AA'})$ the quantum state on the $AB$ system obtained by applying $\mathcal{W}$ on the $A'$-half of the EPR pair $\Phi_{AA'}$.

\begin{definition}[Symmetric coherent information]
\label{def:symmetric-coherent-information}
    Let $\mathcal{W}_{A' \rightarrow B}$ be a channel with qudit input $A'$ and output system $B$ of arbitrary dimension. The symmetric coherent information of  $\mathcal{W}$ is defined as the coherent information of the channel for a uniformly distributed input, that is
    $$I(\mathcal{W}) := -H(A|B)_{\mathcal{W}(\Phi_{AA'})} \in [-1, 1].$$ 
\end{definition}

We further introduce the following parameter of a quantum channel, which can be seen as the quantum counterpart of the classical Bhattacharyya parameter~\cite{our-itw-paper19}, and which we refer to as the ``R\'enyi-Bhattacharyya'' parameter. 

\begin{definition}[R\'enyi-Bhattacharyya parameter]
\label{def:Rényi_bhat}
Let $\mathcal{W}_{A' \rightarrow B}$ be a channel with qudit input $A'$ and output system $B$ of arbitrary dimension. Then,
$$
R(\mathcal{W}) := d^{\hhalf(A|B)_{\mathcal{W}(\Phi_{AA'})}} = d^{-\htwo(A|E)_{\mathcal{W}^c(\Phi_{AA'})}} \in \left[\tfrac{1}{d}, d\right],$$
where $\mathcal{W}^c$ denotes the complementary channel associated with $\mathcal{W}$~\cite{markwildebook}, and the equality \break $\hhalf(A|B)_{\mathcal{W}(\Phi_{AA'})} = -\htwo(A|E)_{\mathcal{W}^c(\Phi_{AA'})}$ follows from~\cite[Theorem 2]{tbh14}.
\end{definition}

We will also need the definitions of the generalized (qudit) Pauli and Clifford groups~\cite{Gottesman98, vlad14}, and  unitary $2$-designs~\cite{dcel09}. 
\begin{definition}[Generalized Pauli Group] 
(a) The Pauli operators $X$ and $Z$ for a qudit quantum system are defined as $X = \sum_{j=0}^{d-1} \ket{j}\bra{j \oplus 1}$, and $Z = \sum_{j=0}^{d-1} \omega^j \ket{j}\bra{j}$, where $\oplus$ denotes the sum modulo $d$, and $\omega = e^{\frac{2\pi \imath}{d}}$. 

\medskip\noindent (b) The generalized Pauli group on one qudit is defined as $\mathcal{P}_d^1 \eqdef \{ \omega^\lambda P_{r,s} \mid \lambda, r, s  = 0, \dots, d-1\}$, where $P_{r,s} \eqdef   X^r Z^s$.

\medskip\noindent (c) The generalized Pauli group on $n$ qudits is defined as $\mathcal{P}_d^n \eqdef \mathcal{P}_d^1  \otimes \mathcal{P}_d^1 \otimes  \cdots \otimes \mathcal{P}_d^1$. 
\end{definition}
It is easily seen that $X^d = Z^d = I$ and $XZ = \omega ZX$, hence $\mathcal{P}_d^1$ is indeed a group. Applying the commutation relation $XZ = \omega ZX$ appropriately many times, we have that 
\begin{equation} \label{eq:commutation-general}
P_{r,s} P_{t,u} = \omega^{ru - st} P_{t,u} P_{r,s}.
\end{equation}

\begin{definition}[Generalized Clifford Group] The Clifford group $\mathcal{C}_d^n$ is the unitary group on $n$ qudits that takes $\mathcal{P}_d^n$ to $\mathcal{P}_d^n$ by conjugation.
\end{definition}
 
Let $\mathcal{U}(d^n)$ be the set of unitary operators on $n$ qudits, and $ \mathcal{W}_n$ be a quantum channel with $n$-qudit input. The twirling of $\mathcal{W}_n$ with respect to $\mathcal{U}(d^n)$ is defined as the quantum channel that maps a $n$-qudit quantum state $\rho$ to $ \int  U^\dagger \mathcal{W}_n(U \rho U^\dagger) U d \eta $, where $U \in \mathcal{U}(d^n)$ is randomly chosen according to the Haar measure $\eta$.  The twirling of $\mathcal{W}_n$ with respect to a finite subset $\mathcal{U} \subset \mathcal{U}(d^n)$ is defined as the quantum channel acting as $\rho \mapsto \frac{1}{|\mathcal{U}|} \sum_{U \in \mathcal{U}}   U^\dagger \mathcal{W}_n (U \rho U^\dagger) U$.

\begin{definition}[Unitary 2-Design] \label{def:unitary-2-design}

 A finite subset $\mathcal{U} \subset \mathcal{U}(d^n)$ is said to form a unitary 2-design if it satisfies the following, for all $n$-qudit input quantum channels $\mathcal{W}_n$, and all $n$-qudit quantum states $\rho$:
\begin{equation} \label{eq:unitary_2design}
\frac{1}{|\mathcal{U}|} \sum_{U \in \mathcal{U}}   U^\dagger \mathcal{W}_n (U \rho U^\dagger) U = \int  U^\dagger \mathcal{W}_n(U \rho U^\dagger) U d \eta.
\end{equation}
\end{definition}

\section{Quantum Polarization of Qudit Channels}
\label{sec:quantum-polarization}

\subsection{Main polarization results}
Throughout this section ${\cal W}_{A' \rightarrow B}$ denotes a quantum channel with qudit input, and arbitrary dimension output.
Our quantum polarization scheme is based on the channel combining and splitting operations depicted in the following figure.  

\bigbreak 
\begin{figure*}[!h]
\centering
\subfigure[Combined channel]{%
\centering
    \begin{tikzpicture}[scale=0.5]
        \draw 
            (0,0) node[draw, minimum size=3mm] (canal1) {$\mathcal{W}$}
            (canal1) ++(0, -1.5) node[draw, minimum size=3mm] (canal2) {$\mathcal{W}$}
            ($.5*(canal1)+.5*(canal2)$) ++(-2.2, 0) node[draw, minimum height=1.5cm] (C) {$C$}
            ;

        \draw
            (canal1 -| C.east) to node[above] {} (canal1)
            (canal2 -| C.east) to node[above] {} (canal2)
            (canal1 -| C.west) to ++(-0.75, 0) node[left] {$A'_1$}
            (canal2 -| C.west) to ++(-0.75, 0) node[left] {$A'_2$}
            (canal1.east) to ++(0.75, 0) node[right] {$B_1$}
            (canal2.east) to ++(0.75, 0) node[right] {$B_2$}
            ;
            
        \node[draw, dashed,minimum height = 22mm, fit=(C) (canal1) (canal2)] (combined) {};
    \end{tikzpicture}
    \label{fig:combined-channel}
}
\subfigure[Bad channel $\mathcal{W}_{C}^{(0)}$]{%
\centering
    \begin{tikzpicture}[scale=0.5]   
        \draw 
            (0,0) node[draw, minimum size=3mm] (canal1) {$\mathcal{W}$}
            (canal1) ++(0, -1.5) node[draw, minimum size=3mm] (canal2) {$\mathcal{W}$}
            ($.5*(canal1)+.5*(canal2)$) ++(-2.2, 0) node[draw, minimum height=1.5cm] (C) {$C$}
            ;

        \draw
            (canal1 -| C.east) to node[above] {} (canal1)
            (canal2 -| C.east) to node[above] {} (canal2)
            (canal1 -| C.west) to ++(-3.0, 0) node[left] {$A'_1$}
            (canal2 -| C.west) to ++(-0.5, 0) node[left] (ground) {$\frac{\ident_{A'_2}}{d}$}
            (canal1.east) to ++(0.75, 0) node[right] (y1) {$B_1$}
            (canal2.east) to ++(0.75, 0) node[right] (y2) {$B_2$}
            ;

        \node[draw, dashed,fit=(C) (canal2) (ground)] { };
    \end{tikzpicture}
    \label{fig:bad-channel}
}\hfill%
\subfigure[Good channel $\mathcal{W}_{C}^{(1)}$]{%
\centering
    \begin{tikzpicture}[scale=0.5]
        \draw 
            (0,0) node[draw, minimum size=3mm] (canal1) {$\mathcal{W}$}
            (canal1) ++(0, -1.5) node[draw, minimum size=3mm] (canal2) {$\mathcal{W}$}
            ($.5*(canal1)+.5*(canal2)$) ++(-2.2, 0) node[draw, minimum height=1.5cm] (C) {$C$}
            ;

        \draw
            (canal1 -| C.east) to node[above] {} (canal1)
            (canal2 -| C.east) to node[above] {} (canal2)
            (canal2 -| C.west) to ++(-4.2, 0) node[left] {$A'_2$} 
            (canal1.east) to ++(0.75, 0) node[right] (y1) {$B_1$}
            (canal2.east) to ++(0.75, 0) node[right] {$B_2$}
            (canal1 -| C.west) to ++(-0.5, 0) to ++(-.5, .5) node[left] (EPR) {$\Phi_{A_1A'_1}$} coordinate (coude) to ++(.5, 1.0) coordinate (topline) to (topline -| y1.west) node[right] (R) {$A_1$}
            ;

        \node[draw, dashed, fit=(EPR) (C) (coude) (topline) (canal2)] { };
    \end{tikzpicture}
    \label{fig:good-channel}
}   
\caption{Channel combining and splitting. (a) combined channel: a two-qudit unitary $C$ is applied on the two inputs. (b) bad channel: we input a totally mixed state into the second input. (c) good channel: we input half of an EPR pair into the first input, and the other half becomes the output $A_1$.}
    \label{fig:channel-combining-splitting}
\end{figure*}

\medskip First, two  instances of ${\cal W}$ are combined, by entangling their inputs through a two-qudit unitary $C$. The combined channel is then split into one  bad and one good channel. The bad channel $\mathcal{W}_{C}^{(0)}$ is a channel from $A'_1$ to $B_1 B_2$ that acts as $\mathcal{W}_{C}^{(0)}(\rho) \break = \mathcal{W}^{\otimes 2}\left( C (\rho \otimes \frac{\ident_{A'_2}}{d}) C^{\dagger} \right)$, where $\frac{\ident_{A'_2}}{d}$ is the completely mixed state. The good channel $\mathcal{W}_{C}^{(1)}$ is a channel from $A'_2$ to $A_1 B_1 B_2$ that acts as $\mathcal{W}_{C}^{(1)}(\rho) = \mathcal{W}^{\otimes 2}\left( C(\Phi_{A_1 A'_1} \otimes \rho) C^{\dagger} \right)$, where $\Phi_{A_1 A'_1}$ is an EPR pair.

The polarization construction is obtained by recursively applying the above channel combining and splitting operations, while choosing $C$ randomly from some finite set of unitaries, denoted by ${\cal U} \subset {\cal U}(d^2)$. To accommodate the random choice of $C\in {\cal U}$,  a classical description of $C$ is included as part of the output of the bad and good channels.  Hence, for $i=0,1$, we define:
\begin{equation}\label{eq:c-as-output}
{\cal W}^{(i)}(\rho) = \frac{1}{|{\cal U}|} \sum_{C \in {\cal U}} \ket{C}\bra{C} \otimes {\cal W}_C^{(i)}(\rho),
\end{equation}
where  $\{ \ket{C} \}_{C \in {\cal U}}$ is an orthogonal basis of some auxiliary system. Applying twice the  transformation ${\cal W} \mapsto \left({\cal W}^{(0)},{\cal W}^{(1)}\right)$, we get channels ${\cal W}^{(i_1 i_2)} \eqdef \left( {\cal W}^{(i_1)} \right)\,\!^{(i_{2})}$,  where  $(i_1i_{2}) \in \{00, 01, 10, 11\}$. In general, after $n$ levels or recursion, we obtain $2^n$ channels:
\begin{equation}\label{eq:recursive_construction}
{\cal W}^{(i_1\dots i_{n})} \eqdef \left( {\cal W}^{(i_1\dots i_{n-1})} \right)\,\!^{(i_{n})}, \ \forall  (i_1\dots i_{n}) \in \{0,1\}^{n}.  
\end{equation}

The quantum polarization theorem below states that the symmetric coherent information of the synthesized channels ${\cal W}^{(i_1\dots i_{n})}$ polarizes, meaning that it goes to either $-1$ or $+1$ as $n$ goes to infinity (except possibly for a vanishing fraction of channels), provided that ${\cal U}$ is a unitary $2$-design. The second theorem states that polarization also happens when ${\cal U}$ is taken to be the generalized Clifford group on two qudits,  ${\cal C}_d^2$, or some specific subset of it.

\begin{theorem}\label{thm:quantum_polarization}
Let $\mathcal{U}$ be a unitary 2-design. For any qudit-input quantum channel ${\cal W}$, let  $ \break \left\{ {\cal W}^{(i_1\dots i_{n})}  : (i_1\dots i_{n}) \in \{0,1\}^n \right\}$ be the set of channels defined in~(\ref{eq:recursive_construction}), with channel combining unitary $C$ randomly chosen from ${\cal U}$. Then, for any $\delta > 0$, 
$$\displaystyle\lim_{n\rightarrow\infty} \frac{\#\{(i_1\dots i_{n}) \in \{0,1\}^{n} : I\left( {\cal W}^{(i_1\dots i_{n})} \right) \in (-1+\delta, 1-\delta)  \}}{2^n} = 0$$
\noindent and furthermore,
$$\displaystyle \lim_{n \rightarrow \infty} \frac{\#\left\{ (i_1,\dots,i_n) \in \{0,1\}^n : I(\mathcal{W}^{(i_1,\dots,i_n)}) \geqslant 1-\delta\right\} }{2^n} = \frac{I(\mathcal{W}) + 1}{2}$$
\end{theorem}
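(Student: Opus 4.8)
The plan is to run the standard polarization martingale argument, with the coherent information $I$ in the role of the symmetric capacity and the R\'enyi--Bhattacharyya parameter $R$ in the role of the Bhattacharyya parameter. Let $B_1,B_2,\dots$ be i.i.d.\ uniform bits, set $\mathcal{W}_0:=\mathcal{W}$, $\mathcal{W}_n:=\mathcal{W}_{n-1}^{(B_n)}$, and $I_n:=I(\mathcal{W}_n)$; since $(B_1,\dots,B_n)$ is uniform on $\{0,1\}^n$, the two ratios in the statement equal $\mathbb{P}\big(I_n\in(-1+\delta,1-\delta)\big)$ and $\mathbb{P}(I_n\geq 1-\delta)$. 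The first step is \emph{conservation of coherent information}: for any finite $\mathcal{U}\subset\mathcal{U}(d^2)$,
\[
I\big(\mathcal{W}^{(0)}\big)+I\big(\mathcal{W}^{(1)}\big)=2\,I(\mathcal{W}).
\]
Because the classical label $C$ is part of the output, $I(\mathcal{W}^{(i)})=\tfrac{1}{|\mathcal{U}|}\sum_{C}\big({-}H(\,\cdot\mid\cdots)_{\sigma_C}\big)$ with $\sigma_C:=\mathcal{W}^{\otimes 2}\big(C(\Phi_{A_1A_1'}\otimes\Phi_{A_2A_2'})C^{\dagger}\big)$; for each fixed $C$ the chain rule gives $H(A_1\mid B_1B_2)_{\sigma_C}+H(A_2\mid A_1B_1B_2)_{\sigma_C}=H(A_1A_2\mid B_1B_2)_{\sigma_C}$, and since $C$ acts only on the input registers $A_1'A_2'$ it is a unitary on the purifying systems of the maximally entangled state $\Phi_{A_1A_1'}\otimes\Phi_{A_2A_2'}$, so this quantity is unchanged and equals $H(A_1A_2\mid B_1B_2)$ on $\mathcal{W}(\Phi)\otimes\mathcal{W}(\Phi)$, i.e.\ $2H(A\mid B)_{\mathcal{W}(\Phi)}=-2I(\mathcal{W})$. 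Hence $\{I_n\}$ is a martingale bounded in $[-1,1]$; it converges almost surely and in $L^1$ to some $I_\infty$ with $\mathbb{E}[I_\infty]=I(\mathcal{W})$. Conservation also shows that, whatever the value of $B_{n+1}$, the increment satisfies $|I_{n+1}-I_n|=|I(\mathcal{W}_n)-I(\mathcal{W}_n^{(0)})|$.

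The crux is the \emph{polarization gap lemma}: there is $\varepsilon\colon(0,1)\to(0,\infty)$ with
\[
I(\mathcal{W})\in(-1+\delta,\,1-\delta)\ \Longrightarrow\ I(\mathcal{W})-I\big(\mathcal{W}^{(0)}\big)\ \geq\ \varepsilon(\delta).
\]
I would prove this through $R$ in two pieces. \emph{(i) $R$ witnesses polarization.} Using the ordering between the Petz--R\'enyi-$\tfrac12$, the von Neumann, and the sandwiched-R\'enyi-$2$ conditional entropies of a qudit system, together with a Fannes/Alicki--Fannes-type continuity estimate for these quantities, one shows that $I(\mathcal{W})$ is close to $+1$ (resp.\ to $-1$) if and only if $R(\mathcal{W})$ is close to $\tfrac1d$ (resp.\ to $d$), with explicit moduli vanishing at $0$; in particular $I(\mathcal{W})\in(-1+\delta,1-\delta)$ forces $R(\mathcal{W})$ into a fixed closed sub-interval of $(\tfrac1d,d)$. \emph{(ii) Quadratic evolution of $R$.} One shows $R(\mathcal{W}^{(1)})\leq R(\mathcal{W})^2$ and, dually --- passing to the complementary channel via the identity $\hhalf(A|B)_{\mathcal{W}(\Phi)}=-\htwo(A|E)_{\mathcal{W}^c(\Phi)}$ --- a matching bound pushing $R(\mathcal{W}^{(0)})$ toward $d$ by a definite amount. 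Combining (i) and (ii): if $R(\mathcal{W})$ is bounded away from both endpoints then $R(\mathcal{W}^{(0)})$ (say) has moved a definite distance toward $d$, and feeding this back through (i) yields the claimed gap in $I$.

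Granting the gap lemma, the first limit follows: since $I_{n+1}-I_n\to 0$ almost surely and $|I_{n+1}-I_n|\geq\varepsilon(\delta)$ whenever $I_n\in(-1+\delta,1-\delta)$, this event occurs only finitely often almost surely, so bounded convergence gives $\mathbb{P}\big(I_n\in(-1+\delta,1-\delta)\big)\to 0$. Applying this for every $\delta=1/k$ shows $I_\infty\in\{-1,+1\}$ almost surely, hence with $p:=\mathbb{P}(I_\infty=1)$ we get $2p-1=\mathbb{E}[I_\infty]=I(\mathcal{W})$, i.e.\ $p=\tfrac{I(\mathcal{W})+1}{2}$. For the second limit, fix a small $\delta>0$: then $1-\delta\in(-1,1)$ is a continuity point of the distribution function of $I_\infty$ (which is supported on $\{-1,1\}$), and almost-sure convergence $I_n\to I_\infty$ implies convergence in distribution, so $\mathbb{P}(I_n\geq 1-\delta)\to\mathbb{P}(I_\infty\geq 1-\delta)=\mathbb{P}(I_\infty=1)=\tfrac{I(\mathcal{W})+1}{2}$, which is the claim. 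The main obstacle is part (ii) of the gap lemma --- obtaining the quadratic bound uniformly over the whole range $R\in[\tfrac1d,d]$ --- and it is precisely there that the unitary $2$-design hypothesis is used: $R$ (equivalently $d^{-\htwo}$) is a degree-two functional of the channel state, so its average over $\mathcal{U}$ coincides with the Haar average, which can be evaluated explicitly from the known form of the two-qudit twirl $\int C^{\otimes 2}(\cdot)(C^{\dagger})^{\otimes 2}\,d\eta$.
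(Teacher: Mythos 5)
Your overall architecture is the same as the paper's: conservation of $I$ under combining and splitting, the equivalence ``$I$ near $\pm 1$ iff $R$ near $1/d$ or $d$'', and a guaranteed improvement of the R\'enyi--Bhattacharyya parameter of the good channel obtained from the $2$-design average. Your conservation argument and your identification of where the $2$-design hypothesis enters (averaging the degree-two functional $d^{-\htwo}$) are both correct, and the final martingale bookkeeping is fine \emph{granting} your ``polarization gap lemma''. The problems are concentrated in that lemma. First, a local error: the bound $R(\mathcal{W}^{(1)})\leq R(\mathcal{W})^2$ is false. Here $R$ lives in $[\tfrac1d,d]$, not in $[0,1]$, so $R(\mathcal{W}^{(1)})\geq \tfrac1d > R(\mathcal{W})^2$ whenever $R(\mathcal{W})<1/\sqrt d$. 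The $2$-design computation gives the exact identity $\mbE_C R(\mathcal{W}^{(1)}_C)=\tfrac{d}{d^2+1}\bigl(1+R(\mathcal{W})^2\bigr)$ (Lemma~\ref{lem:d-good-channel}); the property actually needed is that this map fixes $\{\tfrac1d,d\}$ and lies strictly below the identity on the open interval.

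Second, and more seriously, your derivation of the per-step gap $I(\mathcal{W})\in(-1+\delta,1-\delta)\Rightarrow I(\mathcal{W})-I(\mathcal{W}^{(0)})\geq\varepsilon(\delta)$ does not go through. Steps (i) and (ii) give a definite decrease of $R$ for the good channel when $R(\mathcal{W})$ sits in a compact subinterval of $(\tfrac1d,d)$, but ``feeding this back through (i)'' cannot convert a gap in $R$ into a gap in $I$: the quantitative relations of Lemma~\ref{lem:i-and-t-1} only control $I$ when $R$ is \emph{near an endpoint}. In the interior one only has the one-sided bound $I\geq-\log R$, which can be arbitrarily loose (it permits, for instance, $I(\mathcal{W})=0$ together with $R(\mathcal{W})$ anywhere in $[1,d)$), so $R(\mathcal{W}^{(1)})\leq R(\mathcal{W})-\kappa$ is compatible, as far as these lemmas go, with $I(\mathcal{W}^{(1)})-I(\mathcal{W})$ being arbitrarily small. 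This is precisely the difficulty that the abstract polarization lemma invoked by the paper (\cite[Lemma~7]{DGMS19}) is built to handle: it runs the convergence argument on the pair $(I_n,R_n)$ jointly, combining the martingale convergence of $I_n$, the contraction of $R_n$ in expectation along the good branches, and the endpoint compatibility of Lemma~\ref{lem:i-and-t-1}, rather than deducing a per-step gap for $I_n$ alone. To repair your write-up you should either prove such a joint statement or cite it; the per-step gap in $I$, as stated, is unsupported (and it is not clear it is even true). Your asserted ``dual'' bound pushing $R(\mathcal{W}^{(0)})$ toward $d$ is likewise not derived, and is not needed in the paper's route.
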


\begin{theorem}\label{thm:clifford-is-2-design}
(a) The generalized Clifford group on two qudits, ${\cal C}_d^2$, is a unitary $2$-design. Thus, polarization happens when the channel combining unitary $C$ is randomly chosen from  ${\cal C}_d^2$.

\medskip\noindent (b) If $d$ is prime, there exists a subset ${\cal U} \subset {\cal C}_d^2$, of size $|{\cal U}| = d^4+d^2-2$, which is not a unitary $2$-design, and such that polarization happens when the channel combining unitary $C$ is randomly chosen from ${\cal U}$.
\end{theorem}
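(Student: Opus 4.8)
We treat the two parts separately.

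\textbf{Part (a).} The plan is to use the standard criterion that a finite \emph{group} $\mathcal{U}\subseteq\mathcal{U}(D)$ (here $D=d^2$) is a unitary $2$-design exactly when the commutant of its two-fold tensor action, $\{\,U\otimes U : U\in\mathcal{U}\,\}'$, coincides with the commutant over all of $\mathcal{U}(D)$, which by Schur--Weyl duality for $S_2$ is the two-dimensional space $\Span\{I\otimes I,\ \mathrm{SWAP}\}$. So it suffices to show that any $M$ commuting with all $C\otimes C$, $C\in\mathcal{C}_d^2$, lies in $\Span\{I\otimes I,\mathrm{SWAP}\}$. First, since $\mathcal{P}_d^2\subseteq\mathcal{C}_d^2$, such an $M$ commutes with $P\otimes P$ for every two-qudit Pauli $P$; expanding $M=\sum_{a,b\in\mathbb{Z}_d^4}c_{a,b}\,P_a\otimes P_b$ in the (trace-orthogonal) Pauli basis and using~(\ref{eq:commutation-general}) -- conjugation by $P_r$ multiplies $P_a$ by $\omega^{\langle r,a\rangle}$, with $\langle\cdot,\cdot\rangle$ the symplectic form -- forces $c_{a,b}=0$ whenever $a+b\neq 0$, so (absorbing a phase) $M=\sum_{a}c_a\,P_a\otimes P_a^{\dagger}$.

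Next I would impose that $M$ also commutes with $C\otimes C$ for a general Clifford $C$. Conjugation by $C$ sends $P_a\mapsto\gamma_a\,P_{\phi_C(a)}$ for a phase $\gamma_a$ and a symplectic map $\phi_C\in\mathrm{Sp}(4,\mathbb{Z}_d)$; since $C P_a^{\dagger}C^{\dagger}=\bar\gamma_a\,P_{\phi_C(a)}^{\dagger}$, the phase cancels in the combination $P_a\otimes P_a^{\dagger}$, which is simply sent to $P_{\phi_C(a)}\otimes P_{\phi_C(a)}^{\dagger}$, so commutation forces $c_a=c_{\phi_C(a)}$ for all $a$. Thus $c$ is constant on the orbits of the group $\Phi:=\{\phi_C : C\in\mathcal{C}_d^2\}$ acting on $\mathbb{Z}_d^4$. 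It then remains to check that $\Phi$ is all of $\mathrm{Sp}(4,\mathbb{Z}_d)$ -- the symplectic images of the usual Clifford generators (qudit Fourier, multiplication and phase gates, a CNOT-type gate) generate it -- and that $\mathrm{Sp}(4,\mathbb{Z}_d)$ acts on $\mathbb{Z}_d^4$ with exactly two orbits, $\{0\}$ and $\mathbb{Z}_d^4\setminus\{0\}$. Granting this, $M\in\Span\{\,P_0\otimes P_0^{\dagger},\ \sum_{a\neq 0}P_a\otimes P_a^{\dagger}\,\}=\Span\{I\otimes I,\ \mathrm{SWAP}\}$, using the identity $\sum_a P_a\otimes P_a^{\dagger}=d^2\,\mathrm{SWAP}$. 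Hence the commutant is two-dimensional, $\mathcal{C}_d^2$ is a $2$-design, and the polarization claim follows from Theorem~\ref{thm:quantum_polarization}. The main obstacle is precisely this last step: surjectivity of $C\mapsto\phi_C$ and the two-orbit count are textbook for $d$ prime (every nonzero vector extends to a symplectic basis), and handling general $d$ is the delicate point.

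\textbf{Part (b).} Here the plan is to revisit the proof of Theorem~\ref{thm:quantum_polarization} and distil the weakest property of $\mathcal{U}$ it actually uses. I expect the $2$-design hypothesis to enter only through \emph{second-moment} quantities: averages over the random combiner $C$ of functionals that are quadratic in the combined state $C(\,\cdot\,)C^{\dagger}$ -- which is why the degree-two conditional R\'enyi entropies $\htwo$, $\hhalf$ and the R\'enyi--Bhattacharyya parameter $R(\mathcal{W})$ of Definition~\ref{def:Rényi_bhat} are the natural objects, and why $\mathbb{E}_C[C^{\otimes 2}\otimes\bar C^{\otimes 2}]$ is needed. Moreover, the only states fed into the combiner are the EPR pair $\Phi_{A_1A_1'}$ and the maximally mixed state $\ident/d$, both Pauli-covariant, so a Pauli pre-twirl collapses each such average to an average over the induced symplectic maps $\phi_C$ of a function depending only on how $\phi_C$ moves one fixed coordinate vector (resp.\ one fixed coordinate subspace). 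Hence the real requirement on $\mathcal{U}$ is not ``$2$-design'' but a mild balancedness condition on the multiset $\{\phi_C : C\in\mathcal{U}\}$.

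I would then take $\mathcal{U}$ to be the union of two explicit Clifford families: one of size $d^4-1$ whose symplectic images form a set of coset representatives for the stabiliser of a nonzero vector of $\mathbb{Z}_d^4$ (so they sweep $\mathbb{Z}_d^4\setminus\{0\}$ exactly once, servicing the second moment produced by the two-qudit EPR input), and one of size $d^2-1$ doing the same on $\mathbb{Z}_d^2\setminus\{0\}$ (servicing the residual single-qudit moment), so that $|\mathcal{U}|=(d^4-1)+(d^2-1)=d^4+d^2-2$. Two checks then remain: (i) that this $\mathcal{U}$ meets the balancedness condition distilled from the proof of Theorem~\ref{thm:quantum_polarization}, a direct computation from the ``swept exactly once'' property; and (ii) that $\mathcal{U}$ is genuinely \emph{not} a $2$-design, most cleanly via the frame-potential bound $\frac{1}{|\mathcal{U}|}\sum_{C\in\mathcal{U}}|\tr(C)|^4 > 2$ (equivalently, via the Pauli-expansion of part~(a), by exhibiting an operator in $\{C\otimes C : C\in\mathcal{U}\}'$ outside $\Span\{I,\mathrm{SWAP}\}$, since the symplectic maps coming from $\mathcal{U}$ fail to act transitively on $\mathbb{Z}_d^4\setminus\{0\}$). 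With (i) and (ii) in hand, re-running the proof of Theorem~\ref{thm:quantum_polarization} with the balancedness condition in place of the $2$-design hypothesis gives polarization for this $\mathcal{U}$. I expect the crux to be (i)--(ii) together: isolating the genuinely weakest sufficient condition, and then choosing a $(d^4+d^2-2)$-element set that satisfies it while provably failing to be a $2$-design.
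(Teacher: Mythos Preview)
Your commutant/Schur--Weyl argument is correct and is a standard alternative to the paper's route. The paper instead twirls an arbitrary map $\rho\mapsto A\rho B$ first over $\mathcal{P}_d^2$ (Lemma~\ref{lem:twirling-Pd}, producing a Pauli channel) and then over coset representatives of $\mathcal{C}_d^2/\mathcal{P}_d^2$ (Lemma~\ref{lem:twirling-Cd}, producing the depolarising form), and compares with the known Haar twirl. Both approaches hinge on the same fact---transitivity of the Clifford action on non-identity Paulis modulo phase---which you phrase as the two-orbit count for $\Phi$ on $\mathbb{Z}_d^4$ and the paper simply asserts (``we know that\dots'') in the proof of Lemma~\ref{lem:twirling-Cd}. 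Your route is a little cleaner conceptually; the paper's is more explicit. Your caveat about composite $d$ applies equally to the paper's unproved transitivity assertion, and your detour through $\Phi=\mathrm{Sp}(4,\mathbb{Z}_d)$ is stronger than needed (the two-orbit property of $\Phi$ itself suffices).

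\textbf{Part (b).} Here there is a genuine gap: you miss the structural lemma that makes the count $d^4+d^2-2$ fall out immediately. The paper's argument rests on Lemma~\ref{lem:equivalence-Rényi-bhattachayya}: for any $C_1,C_2\in\mathcal{C}_d^1$ one has $R\big(\mathcal{W}_{C(C_1\otimes C_2)}^{(i)}\big)=R\big(\mathcal{W}_C^{(i)}\big)$. Thus $R(\mathcal{W}_C^{(1)})$ is constant on left cosets of $\mathcal{C}_d^1\otimes\mathcal{C}_d^1$ in $\mathcal{C}_d^2$, so the average in Lemma~\ref{lem:d-good-channel} is unchanged if one averages over any transversal $\mathcal{L}$. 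By Lemma~\ref{lem:number-Clifford} (this is where $d$ prime is used), $|\mathcal{L}|=|\mathcal{C}_d^2|/|\mathcal{C}_d^1|^2=d^4+d^2$. The cosets of $I$ and of the swap $S$ satisfy $R(\mathcal{W}_I^{(1)})=R(\mathcal{W}_S^{(1)})=R(\mathcal{W})\geq\mathbb{E}_{C\in\mathcal{L}}R(\mathcal{W}_C^{(1)})$, so discarding them only decreases the average, and polarization persists for $\mathcal{L}'=\mathcal{L}\setminus\{I,S\}$ of size $d^4+d^2-2$. That $\mathcal{L}'$ is not a $2$-design follows from the cardinality lower bound $d^8-2d^4+2$ for $2$-designs in dimension $d^2$.

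Your plan never isolates this right-$(\mathcal{C}_d^1\otimes\mathcal{C}_d^1)$ invariance, and your proposed decomposition $|\mathcal{U}|=(d^4-1)+(d^2-1)$ via stabiliser transversals in $\mathbb{Z}_d^4$ and $\mathbb{Z}_d^2$ does not correspond to anything in the computation of Lemma~\ref{lem:d-good-channel}: there is a single second-moment integral over the two-qudit combiner, not two separate ``moments'' to service, and nothing in that calculation reduces to sweeping an orbit once. The ``balancedness condition'' you hope to distil is never stated, and there is no argument that your $\mathcal{U}$ reproduces (or improves) the equality $\mathbb{E}_C R(\mathcal{W}_C^{(1)})=\frac{d}{d^2+1}(1+R(\mathcal{W})^2)$. (Minor: the quantity $\frac{1}{|\mathcal{U}|}\sum_C|\tr C|^4$ you write is not the frame potential; in any case the size bound already disposes of the non-$2$-design claim.)
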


We note that part (a) of Theorem~\ref{thm:clifford-is-2-design} may be inferred from Lemmas 1, 2 and 3 in \cite{webb16}. We will give an alternative and more elementary proof in Section~\ref{sec:clifford-is-2-design}, by generalizing the proof from \cite{dcel09} to the qudit case.

\subsection{Proof of Theorem~\ref{thm:quantum_polarization} (quantum polarization)}

To prove the polarization theorem, we essentially need three ingredients, as follows.
\begin{enumerate}[leftmargin=1.3\parindent]
\item For any two-qudit unitary $C$, the total symmetric coherent information is preserved under channel combining and splitting, that is, $I({\cal W}_C^{(0)}) + I({\cal W}_C^{(1)}) = 2I({\cal W})$. We omit the proof of this, as the proof given in~\cite[Lemma 10]{DGMS19} for qubit-input channels remains valid in the qudit case, with minor adjustments.

\item The symmetric coherent information $I({\cal W})$ approaches $\{-1, +1\}$ values if and only if the R\'enyi-Bhattacharyyia parameter $R({\cal W})$ approaches $\{d, 1/d\}$ values. This follows from Lemma~\ref{lem:i-and-t-1}, below. 

\item Taking the good channel yields a guaranteed improvement of the average R\'enyi-Bhattacharyya parameter, in the sense of Lemma~\ref{lem:d-good-channel}, below.
\end{enumerate}

\smallskip \noindent The proof of Theorem~\ref{thm:quantum_polarization} then follows by using~\cite[Lemma 7]{DGMS19}, similar to the proof of quantum polarization for qubit-input channels in~\cite{DGMS19}.


\begin{lemma}\label{lem:i-and-t-1}
    Let $\mathcal{W}_{A' \rightarrow B}$ be a channel with qudit input. Then,
    \begin{enumerate}
        \item $R(\mathcal{W}) \leqslant \frac{1}{d}+\delta \Rightarrow I(\mathcal{W}) \geqslant 1 - \log(1+d\delta)$.
        \item $R(\mathcal{W}) \geqslant d - \delta \Rightarrow  I(\mathcal{W}) \leqslant -1 + 2\sqrt{\frac{\delta}{d}} + \frac{\sqrt{d}+\sqrt{\delta}}{\sqrt{d}} h\left( \frac{\sqrt{\delta}}{\sqrt{d} + \sqrt{\delta}} \right)$, where $h(\cdot)$ denotes the binary entropy function. 
    \end{enumerate}
\end{lemma}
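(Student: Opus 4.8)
The plan is to set $\rho_{AB} := \mathcal{W}(\Phi_{AA'})$, noting that $\rho_A = \ident_A/d$ is maximally mixed since it equals the $A$-marginal of the EPR pair, and to rephrase both implications in terms of the conditional entropies of $\rho_{AB}$. By Definitions~\ref{def:symmetric-coherent-information} and~\ref{def:Rényi_bhat}, $I(\mathcal{W}) = -H(A|B)_\rho$ and $R(\mathcal{W}) = d^{\hhalf(A|B)_\rho}$, so everything reduces to comparing $H(A|B)_\rho$ with $\hhalf(A|B)_\rho$.

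For item~1, I would only invoke the monotonicity of the optimised Petz--R\'enyi conditional entropy in its order: since $\tfrac12<1$ and $H^{\uparrow}_1(A|B)=H(A|B)$, we have $H(A|B)_\rho \le \hhalf(A|B)_\rho$. The hypothesis $R(\mathcal{W}) \le \tfrac1d+\delta$ is equivalent to $\hhalf(A|B)_\rho \le \log\!\big(\tfrac1d+\delta\big) = \log(1+d\delta)-1$, hence $I(\mathcal{W}) = -H(A|B)_\rho \ge 1-\log(1+d\delta)$. This is essentially immediate.

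Item~2 requires a continuity argument, because monotonicity here gives only the useless bound $H(A|B)_\rho \ge \hhalf(A|B)_\rho$. Instead I would show that $R(\mathcal{W})$ close to $d$ forces $\rho_{AB}$ to be close to a product state. Writing $\hhalf(A|B)_\rho = 2\log\sup_{\sigma_B}\tr\!\big[\rho_{AB}^{1/2}(\ident_A\otimes\sigma_B)^{1/2}\big]$, the hypothesis $R(\mathcal{W}) \ge d-\delta$ produces (by compactness of the set of states) a $\sigma_B^\star$ with $\tr\!\big[\rho_{AB}^{1/2}(\ident_A\otimes\sigma_B^\star)^{1/2}\big] \ge \sqrt{d-\delta}$. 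Setting $\tau_{AB} := \tfrac{\ident_A}{d}\otimes\sigma_B^\star$, this reads $\tr\!\big[\rho_{AB}^{1/2}\tau_{AB}^{1/2}\big] \ge \sqrt{1-\delta/d}$, so $F(\rho_{AB},\tau_{AB}) \ge \tr\!\big[\rho_{AB}^{1/2}\tau_{AB}^{1/2}\big] \ge \sqrt{1-\delta/d}$ (the elementary ``just-as-good fidelity'' bound $F(\rho,\sigma)\ge\tr[\sqrt\rho\sqrt\sigma]$), and the Fuchs--van de Graaf inequality gives $\tfrac12\|\rho_{AB}-\tau_{AB}\|_1 \le \sqrt{1-F^2} \le \sqrt{\delta/d}$.

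Finally I would plug this into the Alicki--Fannes--Winter continuity bound for the conditional von Neumann entropy, applied to $\rho_{AB}$ and $\tau_{AB}$. Since $\tau_{AB}$ is a product state with maximally mixed $A$-part, $H(A|B)_\tau = \log d = 1$, and $\log\dim A = 1$ in base $d$; writing $\epsilon := \tfrac12\|\rho_{AB}-\tau_{AB}\|_1 \le \sqrt{\delta/d}$ (we may assume $\delta\le d$ so that $\epsilon\le1$), we obtain $H(A|B)_\rho \ge 1 - 2\epsilon - (1+\epsilon)\,h\!\big(\tfrac{\epsilon}{1+\epsilon}\big)$. This lower bound is decreasing in $\epsilon$, so it stays valid with $\epsilon$ replaced by $\sqrt{\delta/d}$; rewriting $\sqrt{\delta/d} = \sqrt\delta/\sqrt d$ and $\tfrac{\sqrt{\delta/d}}{1+\sqrt{\delta/d}} = \tfrac{\sqrt\delta}{\sqrt d+\sqrt\delta}$, and negating, yields precisely $I(\mathcal{W}) \le -1 + 2\sqrt{\delta/d} + \tfrac{\sqrt d+\sqrt\delta}{\sqrt d}\,h\!\big(\tfrac{\sqrt\delta}{\sqrt d+\sqrt\delta}\big)$. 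The main obstacle is the constant-chasing in item~2 --- in particular verifying the fidelity inequality and keeping the binary-entropy term of Alicki--Fannes--Winter in base-$d$ logarithms, consistently with the paper's convention.
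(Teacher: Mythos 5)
Your proposal is correct and follows essentially the same route as the paper: part 1 via monotonicity of the Petz--R\'enyi conditional entropy in its order, and part 2 by extracting a near-optimal $\sigma_B^\star$, bounding the fidelity to $\tfrac{\ident_A}{d}\otimes\sigma_B^\star$ from below (your $F(\rho,\tau)\ge\tr[\sqrt\rho\sqrt\tau]$ is exactly the paper's step $\tr[\cdot]\le\|\cdot\|_1$), then applying Fuchs--van de Graaf and Alicki--Fannes--Winter. The constant-chasing you flag works out exactly as in the paper.
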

\begin{proof}
       We prove first 1). For $\rho_{AB} = \mathcal{W}(\Phi_{AA'})$, we have that
    $$ \frac{1}{d} + \delta \geqslant R(\mathcal{W})
        = d^{\hhalf(A|B)_{\rho}}
        \geqslant d^{H(A|B)_{\rho}}
        = d^{-I(\mathcal{W})},$$
 where we have used $\hhalf(A|B)_{\rho} \geqslant H(A|B)_{\rho}$ for the second inequality, which follows from the monotonically decreasing property of the conditional Petz-Rényi entropy with respect to its order \cite[Theorem 7]{mdsft13}. Hence, $I(\mathcal{W}) \geqslant 1 - \log(1+d\delta)$. 

    \smallskip We now turn to point 2). We have that 
    \begin{align}
    d - \delta &\leqslant R(\mathcal{W}) \leqslant R(\mathcal{W}) \nonumber \\
    &= \max_{\sigma_B} \tr\left[ \rho^{\frac{1}{2}}_{AB} \sigma^{\frac{1}{2}}_B \right]^2 \nonumber \\
    & = d \max_{\sigma_B} \tr\left[ \sqrt{\rho_{AB}} \sqrt{\frac{\ident_A}{d} \otimes \sigma_B} \right]^2 \nonumber \\
    & \leqslant d \max_{\sigma_B} \left\| \sqrt{\rho_{AB}} \sqrt{\frac{\ident_A}{d} \otimes \sigma_B} \right\|_1^2 \\
    &= d \max_{\sigma_B} F\left( \rho_{AB}, \frac{\ident_A}{d} \otimes \sigma_B \right)^2
    \end{align}
%
%
%
 Using the Fuchs-van de Graaf inequalities~\cite{fuchs-vdg}, we get that there exists a $\sigma_B$ such that
    $ \break \frac{1}{2} \left\| \rho_{AB} - \frac{\ident_A}{d} \otimes \sigma_B \right\|_1 \leqslant \sqrt{\frac{ \delta}{d}}$.
    We are now in a position to use the Alicki-Fannes-Winter~\cite[Lemma 2]{winter16} inequality, which states that
    \begin{align*}
        \left| H(A|B)_{\rho} - 1 \right| \leqslant 2 \sqrt{\frac{\delta}{d}} + \frac{\sqrt{d} + \sqrt{\delta}}{\sqrt{d}} h\left(\frac{\sqrt{\delta}}{{\sqrt{d} + \sqrt{\delta}}}\right).
    \end{align*}
    This concludes the proof of the lemma.
\end{proof}

\begin{lemma}\label{lem:d-good-channel}
Let $\mathcal{W}_{A' \rightarrow B}$ be a channel with qudit input. Then, 
$$\mbE_C R\left(\mathcal{W}_{C}^{(1)}\right) = \frac{d}{d^2+1} \left(1 + R(\mathcal{W})^2 \right) \leq R(\mathcal{W}), $$
where $\mbE_C$ denotes the expectation operator, $C$ is the channel combining unitary, chosen uniformly at random from a unitary $2$-design ${\cal U}$. Moreover, equality happens if and only if $R(\mathcal{W}) \in \{1/d, d\}$.
\end{lemma}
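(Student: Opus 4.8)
The plan is to evaluate the expectation exactly, obtaining the stated closed form $\tfrac{d}{d^2+1}(1+R(\mathcal{W})^2)$, and then deduce the inequality from an elementary one‑variable estimate.

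\emph{Setup via the complementary channel.} Recall from Definitions~\ref{def:Rényi-2-entropy} and~\ref{def:Rényi_bhat} that $R(\mathcal{V})=d^{-\htwo(A|E)_{\mathcal{V}^c(\Phi_{AA'})}}$ and $d^{-\htwo(A|E)_\rho}=\tr[\rho_E^{-1/2}\rho_{AE}\rho_E^{-1/2}\rho_{AE}]$ (inverses on the support). Fixing a Stinespring isometry $V\colon A'\to BE$ of $\mathcal{W}$ (and two copies $V_1,V_2$), the good channel $\mathcal{W}_C^{(1)}\colon A_2'\to A_1B_1B_2$ dilates to the isometry $\ket{\psi}_{A_2'}\mapsto(V_1\otimes V_2)\,C\,(\ket{\Phi}_{A_1A_1'}\otimes\ket{\psi}_{A_2'})$, whose environment is $E_1E_2$; feeding the $A_2'$‑half of a fresh EPR pair through it and tracing the output gives the complementary state $\xi_C\eqdef(\mathcal{W}_C^{(1)})^c(\Phi_{A_2A_2'})=(\mathcal{W}^c\otimes\mathcal{W}^c)\big(C(\tfrac{\ident_{A_1'}}{d}\otimes\Phi_{A_2A_2'})C^\dagger\big)$, with $C$ acting on $A_1'A_2'$. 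The crucial observation is that $(\xi_C)_{E_1E_2}=(\mathcal{W}^c\otimes\mathcal{W}^c)(\ident_{A_1'A_2'}/d^2)=\rho_E\otimes\rho_E$, with $\rho_E\eqdef\mathcal{W}^c(\ident_{A'}/d)$, is \emph{independent of $C$}. Hence, writing $\Pi\eqdef\rho_E^{-1/2}\otimes\rho_E^{-1/2}$ on $E_1E_2$ and using the elementary identity $\tr[M\xi N\xi]=\tr[\mathbb{F}(M\otimes N)(\xi\otimes\xi)]$ (with $\mathbb{F}$ the swap of two copies of $A_2E_1E_2$), we get
$$R\big(\mathcal{W}_C^{(1)}\big)=\tr\!\Big[\mathbb{F}\,\big((\ident_{A_2}\otimes\Pi)\otimes(\ident_{A_2}\otimes\Pi)\big)(\xi_C\otimes\xi_C)\Big].$$

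\emph{Computing the expectation.} The right‑hand side depends on $C$ only through the map $X\mapsto C^{\otimes2}X(C^\dagger)^{\otimes2}$ on the $A_1'A_2'$ systems (evaluated at the fixed operator $\Theta\eqdef(\tfrac{\ident_{A_1'}}{d}\otimes\Phi_{A_2A_2'})^{\otimes2}$, then composed with $(\mathcal{W}^c)^{\otimes4}$ and a fixed contraction), so the average over a unitary $2$‑design coincides with the Haar average, by Definition~\ref{def:unitary-2-design}. I would then invoke the twirl formula $\int(C\otimes C)\,\Theta\,(C\otimes C)^\dagger\,d\eta=\ident\otimes c_I+\mathbb{F}\otimes c_F$ on the $d^2$‑dimensional space $A_1'A_2'$ (with $A_2^{\otimes2}$ a spectator, so $c_I,c_F$ act on $A_2^{\otimes2}$), where $c_I,c_F$ are obtained from the standard $2\times2$ linear system with data $\tr_{(A_1'A_2')^{\otimes2}}\Theta=\ident_{A_2^{\otimes2}}/d^2$ and $\tr_{(A_1'A_2')^{\otimes2}}(\mathbb{F}\Theta)=\mathbb{F}_{A_2^{\otimes2}}/d^3$ (the latter from the maximally‑entangled identity $\tr_{A'^{\otimes2}}[\mathbb{F}_{A'}\Phi_{AA'}^{\otimes2}]=\mathbb{F}_{A^{\otimes2}}/d^2$). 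Applying $(\mathcal{W}^c)^{\otimes4}$: the $\ident$‑term becomes $d^4\rho_E^{\otimes4}$, which on contracting against $\mathbb{F}$ and $\Pi^{\otimes2}$ collapses to the constant $\tfrac{d}{d^2+1}$; the $\mathbb{F}$‑term becomes $(\mathcal{W}^c\otimes\mathcal{W}^c)(\mathbb{F}_{A_1'})\otimes(\mathcal{W}^c\otimes\mathcal{W}^c)(\mathbb{F}_{A_2'})$, and since with $M_{ij}\eqdef\mathcal{W}^c(\ket{i}\bra{j})$ one has $(\mathcal{W}^c\otimes\mathcal{W}^c)(\mathbb{F}_{A'})=\sum_{ij}M_{ij}\otimes M_{ji}$ and $\sum_{ij}\tr[\rho_E^{-1/2}M_{ij}\rho_E^{-1/2}M_{ji}]=d^2R(\mathcal{W})$ (which is just $R(\mathcal{W})=d^{-\htwo(A|E)_{\mathcal{W}^c(\Phi)}}$ written in matrix units), this term contributes $\tfrac{d}{d^2+1}R(\mathcal{W})^2$. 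Summing the two contributions gives $\mbE_C R(\mathcal{W}_C^{(1)})=\tfrac{d}{d^2+1}(1+R(\mathcal{W})^2)$.

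\emph{The inequality.} Since $R(\mathcal{W})\in[\tfrac1d,d]$ by Definition~\ref{def:Rényi_bhat}, and $dx^2-(d^2+1)x+d=(dx-1)(x-d)\le0$ precisely for $x\in[\tfrac1d,d]$, with equality iff $x\in\{\tfrac1d,d\}$, we conclude $\tfrac{d}{d^2+1}(1+R(\mathcal{W})^2)\le R(\mathcal{W})$, with equality iff $R(\mathcal{W})\in\{\tfrac1d,d\}$.

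\emph{Main obstacle.} The conceptual pieces (the complementary‑channel reduction, the reduction of the $2$‑design average to a Haar twirl, and the final scalar inequality) are all short. The real work is in the expectation step: carefully tracking which swaps and which $\rho_E^{\pm1/2}$ factors act on which of the many copies of $A_2,E_1,E_2$, and recognizing that the $\mathbb{F}$‑component of the twirl, once passed through $(\mathcal{W}^c)^{\otimes4}$ and contracted against $\Pi^{\otimes2}$, reassembles exactly into $R(\mathcal{W})^2$ rather than some other quadratic quantity.
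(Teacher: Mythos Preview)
Your proposal is correct and follows essentially the same approach as the paper: both reduce to the complementary channel, observe that the marginal on $E_1E_2$ is $C$-independent, and evaluate the $2$-design average of the purity-type quantity by the Haar twirl on $U(d^2)$; the resulting quadratic inequality in $R(\mathcal{W})$ is then handled in the same way. The only cosmetic difference is that the paper outsources the twirl computation to an external reference and then invokes additivity of $\htwo$ on product states to identify $d^{-\htwo(A_1A_2|E_1E_2)_\omega}=R(\mathcal{W})^2$, whereas you carry out the Weingarten/swap-trick calculation by hand and recover $R(\mathcal{W})^2$ directly via the matrix-unit identity $\sum_{ij}\tr[\rho_E^{-1/2}M_{ij}\rho_E^{-1/2}M_{ji}]=d^2R(\mathcal{W})$.
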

\begin{proof}
   Let $\mathcal{W}^c_{A' \rightarrow E}$  and  $(\mathcal{W}_{C}^{(1)})_{{{A'_2 \to E_1E_2}}}^c$ be the complementary channel associated with $\mathcal{W}_{A' \rightarrow B}$ and the good channel $ \mathcal{W}^{(1)}_{C_{A'_2 \to A_1 B_1B_2}}$, respectively. The complementary of the good channel acts as $(\mathcal{W}_{C}^{(1)})^c(\rho) = (\mathcal{W}^c \otimes \mathcal{W}^c)\left( C \left( \frac{\ident_{A'_1}}{d} \otimes \rho \right) C^{\dagger} \right)$ (see ~\cite[Appendix A]{DGMS19} for a proof). Therefore, $R(\mathcal{W}_{C}^{(1)}) = d^{-\htwo(A_2|E_1 E_2)_{\rho}}$,
    where $\rho_{A_2 E_1 E_2} = (\mathcal{W}_{C}^{(1)})^c(\Phi_{A_2 A'_2})$. Note that $\rho_{E_1 E_2} = \mathcal{W}^c\left(\frac{\ident}{d} \right) \otimes \mathcal{W}^c\left(\frac{\ident}{d} \right)$, which is independent of $C$.  To compute the expected value of $R(\mathcal{W}_{C}^{(1)})$ with respect to $C$, we proceed as follows.
\begin{align}
\mbE_C d^{-\htwo(A_2|E_1 E_2)_{\rho}} &= \mbE_C \tr\left[ \left( \rho_{E_1 E_2}^{-\frac{1}{4}} \rho_{A_2 E_1 E_2} \rho_{E_1 E_2}^{-\frac{1}{4}} \right)^2 \right] \nonumber \\
&= \mbE_C \tr\left[ \left( \rho_{E_1 E_2}^{-\frac{1}{4}} (\mathcal{W}^c \otimes \mathcal{W}^c)\left( C \left( \frac{\ident_{A'_1}}{d} \otimes \Phi_{A_2 A'_2} \right) C^{\dagger} \right) \rho_{E_1 E_2}^{-\frac{1}{4}} \right)^2 \right]. \nonumber 
\end{align}   
%
 Note that this is basically the same calculation as in~\cite[Equation~(3.32)]{fred-these} (there, $U$ is chosen according to the Haar measure over the full unitary group, but all that is required is a unitary 2-design). However, we will not make the simplifications after (3.44) and (3.45) in \cite{fred-these}, but will instead keep all the terms. We therefore get 
    $\mbE_C d^{-\htwo(A_2|E_1 E_2)_{\rho}} = \alpha \tr\left[ (\frac{\ident_{A_2}}{d})^2 \right] + \beta \tr\left[ (\frac{\ident_{A_1'}}{d} \otimes \Phi_{A_2 A'_2})^2 \right]
        = \frac{1}{d} \alpha + \frac{1}{d} \beta$,
    where
    $\alpha  = \frac{d^4}{d^4 - 1} - \frac{d^2}{d^4 - 1} d^{-\htwo(A_1 A_2|E_1 E_2)_{\omega}}$, 
    $\beta = \frac{d^4}{d^4 - 1}  d^{-\htwo(A_1 A_2|E_1 E_2)_{\omega}}- \frac{d^2}{d^4 - 1}  $, and  $\omega_{A_1 A_2 E_1 E_2} := (\mathcal{W}^c \otimes \mathcal{W}^c)(\Phi_{A_1 A'_1} \otimes \Phi_{A_2 A'_2})$.  Hence,
    \begin{align*}
        \mbE_C d^{-\htwo(A_2|E_1 E_2)_{\rho}} &= \frac{d}{d^2 + 1} + \frac{d}{d^2 + 1} d^{-\htwo(A_1 A_2|E_1 E_2)_{\omega}}\\
        &= \frac{d}{d^2 + 1} (  1 +  R(\mathcal{W})^2 ),
    \end{align*}
where the second equality follows from $d^{-\htwo(A_1 A_2|E_1 E_2)_{\omega}} = R(\mathcal{W})^2$ using the fact that conditional sandwiched Rényi entropy of order 2 is additive with respect to tensor-product states. It is easily seen that the function  $f(R) = \frac{d}{d^2 + 1} (  1 +  R^2 )$ is a convex function satisfying $f(R) = R$ for $R \in \{ \frac{1}{d}, d \}$ and $f(R) < R$ for $R \in (\frac{1}{d}, d)$.
\end{proof}

\subsection{Proof of Theorem~\ref{thm:clifford-is-2-design}} \label{sec:clifford-is-2-design}

\noindent {\bf \em Proof of part (a).} It is shown in~\cite[Theorem 1]{dcel09} (see also~\cite{notes_Olivia}) that the Clifford group on $n$-qubits
forms a unitary $2$-design for any $n\geq 1$. Here, we generalize the proof from~\cite{dcel09} to the qudit case, and for $n=2$.  We need to prove that the Clifford group $\mathcal{C}_d^2$ satisfies the Definition~\ref{def:unitary-2-design}. For this, it is sufficient to prove (\ref{eq:unitary_2design}), with $\mathcal{U} = \mathcal{C}_d^2$, for two-qudit input quantum channels  of the form $\mathcal{W}_2(\rho) := A\rho B$ (since any quantum channel is a convex combination of quantum channels of this form).

 We first consider the twirling of $\mathcal{W}_2$ with respect to the Clifford group $\mathcal{C}_d^2$. Since the Pauli group $\mathcal{P}_d^2$ is a normal subgroup of  $\mathcal{C}_d^2$, we may chose a subset $\bar{\mathcal{C}}_d^2 \subset  \mathcal{C}_d^2$ containing one representative for each  equivalence class in the quotient group $\mathcal{C}_d^2 / \mathcal{P}_d^2$. Thus, any element of  $\mathcal{C}_d^2$ can be uniquely written as a product $CP$, where $C \in \bar{\mathcal{C}}_d^2$, and $P \in  \mathcal{P}_d^2$. Therfore, in order to twirl $\mathcal{W}_2$ with respect to $\mathcal{C}_d^2$, we may first twirl it with respect to $\mathcal{P}_d^2$, then twirl again the obtained channel with respect to  $\bar{\mathcal{C}}_d^2$.

\medskip The elements of $\mathcal{P}_d^2$ have the form $\omega^\lambda P_{r,s} \otimes P_{r',s'}$, with $\lambda,r,s,r',s' = 0,\dots, d-1$. Hence, twirling $\mathcal{W}_2$ with respect to $\mathcal{P}_d^2$ gives a quantum channel, denoted $\mathcal{W}_2'$, defined below
{\small
\begin{align}
\mathcal{W}_2'(\rho) &:= \frac{1}{d^5}\sum_{\lambda,r,s,r',s'} \left( \omega^\lambda P_{r,s} \otimes P_{r',s'}\right)^\dagger A \left( \omega^\lambda P_{r,s} \otimes P_{r',s'}\right) \rho \left( \omega^\lambda P_{r,s} \otimes P_{r',s'}\right)^\dagger B \left( \omega^\lambda P_{r,s} \otimes P_{r',s'}\right),  \nonumber \\
&= \frac{1}{d^4}\sum_{r,s,r',s'} ( P_{r,s}^\dagger \otimes P_{r',s'}^\dagger) A \left( P_{r,s} \otimes P_{r',s'}\right) \rho ( P_{r,s}^\dagger \otimes P_{r',s'}^\dagger) B \left( P_{r,s} \otimes P_{r',s'}\right). \label{eq:twirling-barP2} 
\end{align} 
}
The last equality from the above shows that it is actually enough to twirl $\mathcal{W}_2$ with respect to the subset $\bar{\mathcal{P}}_d^2 := \left\{ P_{r,s} \otimes P_{r',s'} \mid  r,s,r',s' = 0,\dots, d-1 \right\}$, obtained by omitting phase factors. Since $\bar{\mathcal{P}}_d^2$ forms an operator basis (for two-qudit operators), we may write 
$A = \sum_{r,s,r',s'} \alpha(r,s,r',s')  P_{r,s} \otimes P_{r',s'}$, and
$B = \sum_{r,s,r',s'} \beta(r,s,r',s')  P_{r,s} \otimes P_{r',s'}$.
%
%
The following two lemmas are proven in Appendix~\ref{appx:proof:lem:twirling-Pd} and Appendix~\ref{appx:proof:lem:twirling-Cd}, respectively.
\begin{lemma} \label{lem:twirling-Pd}
The quantum channel $\mathcal{W}_2'$, obtained by twirling $\mathcal{W}_2$ with respect to $\bar{\mathcal{P}}_d^2$, is a Pauli channel satisfying the following
\begin{align}
\mathcal{W}_2'(\rho) &= \sum_{r,s,r',s'} \gamma_{r,s,r',s'} \left(P_{r,s} \otimes P_{r',s'}\right) \rho (P_{r,s}^\dagger \otimes P_{r',s'}^\dagger), \label{eq:twirling-P2-is-Pauli}
\end{align}
where $\gamma_{r,s,r',s'} := \omega^{rs+r's'} \alpha(r,s,r',s') \beta(-r,-s,-r',-s')$ and $-x$ denotes the additive inverse of $x$ modulo $d$.
\end{lemma}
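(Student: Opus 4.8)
The goal is to compute explicitly the channel $\mathcal{W}_2'$ obtained by conjugating $\mathcal{W}_2(\rho) = A\rho B$ by each two-qudit Pauli (without phases) and averaging. The plan is to substitute the Pauli expansions $A = \sum \alpha(\vec{a}) P_{\vec{a}}$ and $B = \sum \beta(\vec{b}) P_{\vec{b}}$ — where I write $\vec{a} = (r,s,r',s')$ for a multi-index and $P_{\vec a} = P_{r,s}\otimes P_{r',s'}$ — into the averaged expression \eqref{eq:twirling-barP2}, and then push the conjugating Paulis through $A$ and $B$ using the commutation relation \eqref{eq:commutation-general}. Concretely, for a single qudit $P_{t,u}^\dagger P_{r,s} P_{t,u} = \omega^{su - tr\cdot 0\dots}$ — more precisely, from \eqref{eq:commutation-general} one gets $P_{t,u} P_{r,s} = \omega^{tu' - \dots}$; the clean statement is $P_{t,u}^\dagger P_{r,s} P_{t,u} = \omega^{\langle (t,u),(r,s)\rangle_{\text{symp}}} P_{r,s}$ for the appropriate symplectic form read off \eqref{eq:commutation-general}. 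Doing this on both tensor factors turns each conjugated term $(P_{\vec c}^\dagger A P_{\vec c})$ into $\sum_{\vec a}\alpha(\vec a)\,\omega^{B(\vec c,\vec a)} P_{\vec a}$ for a bilinear form $B(\cdot,\cdot)$ coming from the symplectic pairing, and similarly for $B$.

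The next step is to carry out the sum over the conjugating index $\vec c = (t,u,t',u')$. After the substitution, $\mathcal{W}_2'(\rho)$ becomes $\frac{1}{d^4}\sum_{\vec c}\sum_{\vec a,\vec b}\alpha(\vec a)\beta(\vec b)\,\omega^{B(\vec c,\vec a) + B(\vec c,\vec b)} P_{\vec a}\,\rho\, P_{\vec b}$. The inner sum $\frac{1}{d^4}\sum_{\vec c}\omega^{B(\vec c,\,\vec a + \vec b)}$ is a product of four geometric sums over $\mathbb{Z}_d$, each of which equals $1$ if the corresponding coordinate of $\vec a + \vec b$ (suitably paired via the symplectic form) vanishes and $0$ otherwise. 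Since the symplectic form is nondegenerate, this forces $\vec b = -\vec a$, i.e. $P_{\vec b} = P_{-r,-s}\otimes P_{-r',-s'}$. Collapsing the sum then leaves $\mathcal{W}_2'(\rho) = \sum_{\vec a}\alpha(\vec a)\beta(-\vec a)\,P_{\vec a}\,\rho\, P_{-\vec a}$, and one finally rewrites $P_{-r,-s} = P_{r,s}^\dagger$ up to a phase: indeed $P_{r,s}^\dagger = Z^{-s}X^{-r} = \omega^{rs} X^{-r}Z^{-s} = \omega^{rs} P_{-r,-s}$, so $P_{-r,-s} = \omega^{-rs} P_{r,s}^\dagger$. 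Absorbing these two phase factors (one from each tensor factor) into the coefficient yields exactly $\gamma_{r,s,r',s'} = \omega^{rs + r's'}\alpha(r,s,r',s')\beta(-r,-s,-r',-s')$, which is the claimed form \eqref{eq:twirling-P2-is-Pauli}.

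The only genuinely delicate point is bookkeeping the phases and the exact shape of the symplectic form: one must be careful that \eqref{eq:commutation-general} gives $P_{r,s}P_{t,u} = \omega^{ru-st}P_{t,u}P_{r,s}$, so that conjugation by $P_{t,u}$ multiplies $P_{r,s}$ by $\omega^{ru-st}$ (or its inverse, depending on the side), and that the resulting character sum over $(t,u)$ really does pick out $(r+t_{\text{shift}}, s+s_{\text{shift}}) = 0$ — i.e. that the map $\vec a \mapsto$ (the tuple of symplectic pairings) is a bijection on $\mathbb{Z}_d^4$. This is where the primality of $d$ is \emph{not} needed (the argument works for all $d\ge 2$ since we only use that $\sum_{j\in\mathbb{Z}_d}\omega^{jk} = d\,[k\equiv 0]$), but sign errors are easy; I would track one tensor factor carefully and then invoke symmetry for the other. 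Everything else — the disappearance of the $\omega^\lambda$ phase in \eqref{eq:twirling-barP2}, the reduction to $\bar{\mathcal P}_d^2$, and the final relabelling — is routine.
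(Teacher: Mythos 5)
Your proposal follows essentially the same route as the paper's proof: expand $A$ and $B$ in the Pauli basis, commute the twirling Paulis through via \eqref{eq:commutation-general} to produce a character sum over the conjugating index, use orthogonality of the characters to force the $B$-index to be the negative of the $A$-index, and convert $P_{-r,-s}$ into $P_{r,s}^\dagger$ to extract the phase $\omega^{rs+r's'}$. One caveat on the point you yourself flag as delicate: your intermediate identity $Z^{-s}X^{-r}=\omega^{rs}X^{-r}Z^{-s}$ has the wrong sign (it should be $\omega^{-rs}$, giving $P_{-r,-s}=\omega^{rs}P_{r,s}^\dagger$), and your stated conclusion $\gamma=\omega^{rs+r's'}\alpha\beta$ does not follow from the relation you wrote but does follow from the corrected one, so the final coefficient is right only because two sign slips cancel.
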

\begin{lemma} \label{lem:twirling-Cd}
The quantum channel obtained by twirling $\mathcal{W}_2'$ with respect to $\bar{\mathcal{C}}_d^2$, is the quantum channel $\mathcal{W}_2''$ acting as
\begin{align}
\mathcal{W}_2''(\rho) = \frac{\tr(AB)}{d^4} \ident\otimes\ident + \frac{d^2\tr(A)\tr(B) - \tr(AB)}{d^2(d^4-1)} \left( \rho - \frac{1}{d^2} \ident\otimes\ident\right). \label{eq:twirl-C2}
\end{align}
\end{lemma}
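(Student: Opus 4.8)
The plan is to compute the twirl $\frac{1}{|\bar{\mathcal{C}}_d^2|}\sum_{C\in\bar{\mathcal{C}}_d^2} C^\dagger \mathcal{W}_2'(C\rho C^\dagger) C$ by exploiting the fact, established in Lemma~\ref{lem:twirling-Pd}, that $\mathcal{W}_2'$ is a Pauli channel: $\mathcal{W}_2'(\sigma) = \sum_{v} \gamma_v\, Q_v \sigma Q_v^\dagger$, where I write $v = (r,s,r',s')$ and $Q_v = P_{r,s}\otimes P_{r',s'}$ for brevity. Conjugating by $C\in\bar{\mathcal{C}}_d^2$ and using that $C^\dagger Q_v C$ is again (up to a phase that cancels between the two occurrences) a two-qudit Pauli operator $Q_{\pi_C(v)}$, the twirl becomes $\sum_v \gamma_v \cdot \frac{1}{|\bar{\mathcal{C}}_d^2|}\sum_{C} Q_{\pi_C(v)}\,\rho\, Q_{\pi_C(v)}^\dagger$. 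The key group-theoretic input is that the Clifford group acts on the set of non-identity Pauli operators (modulo phases), i.e.\ on $\mathbb{Z}_d^4\setminus\{0\}$, \emph{transitively} — this is where $d$ prime is implicitly comfortable, and it is the standard fact that Cliffords act as the full symplectic group $\mathrm{Sp}(4,\mathbb{Z}_d)$ on phase space, which is transitive on nonzero vectors. Hence for $v\neq 0$ the average $\frac{1}{|\bar{\mathcal{C}}_d^2|}\sum_C Q_{\pi_C(v)}\,\rho\,Q_{\pi_C(v)}^\dagger$ equals $\frac{1}{d^4-1}\sum_{w\neq 0} Q_w\,\rho\,Q_w^\dagger$, independent of $v$, while for $v=0$ it is just $\rho$.

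Next I would use the standard Pauli-twirl identity $\frac{1}{d^4}\sum_{w\in\mathbb{Z}_d^4} Q_w\,\rho\,Q_w^\dagger = \frac{\tr(\rho)}{d^2}\ident\otimes\ident$ (the complete depolarizing channel, since $\{Q_w\}$ is a unitary 1-design), which gives $\sum_{w\neq 0} Q_w\,\rho\,Q_w^\dagger = d^4\,\frac{\tr\rho}{d^2}\ident\otimes\ident - \rho = d^2\tr(\rho)\,\ident\otimes\ident - \rho$. Substituting back, the twirl of $\mathcal{W}_2'$ evaluates to
\begin{equation*}
\mathcal{W}_2''(\rho) = \gamma_0\,\rho + \Big(\sum_{v\neq 0}\gamma_v\Big)\cdot\frac{d^2\tr(\rho)\,\ident\otimes\ident - \rho}{d^4-1}.
\end{equation*}
It then remains to identify the two scalars $\gamma_0$ and $S:=\sum_{v}\gamma_v$ in terms of $A,B$. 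From Lemma~\ref{lem:twirling-Pd}, $\gamma_0 = \alpha(0,0,0,0)\beta(0,0,0,0)$; since $\alpha(0,\ldots,0) = \tr(A)/d^2$ and similarly for $B$ (the coefficient of the identity in the Pauli expansion is the normalized trace), $\gamma_0 = \tr(A)\tr(B)/d^4$. For $S$, I would recognize that $\sum_v \gamma_v = \sum_v \omega^{rs+r's'}\alpha(v)\beta(-v)$ is, up to the normalization of the Pauli basis, exactly $\tr(AB)/d^2$: expanding $\tr(AB) = \sum_{v,v'}\alpha(v)\beta(v')\tr(Q_vQ_{v'})$ and using the orthogonality relation $\tr(Q_vQ_{v'}) = d^2\,\omega^{rs+r's'}\,\delta_{v',-v}$ (which follows from $P_{r,s}P_{-r,-s} = \omega^{?}I$ and $\tr(Q_w)=d^2\delta_{w,0}$) yields $\tr(AB) = d^2\sum_v \omega^{rs+r's'}\alpha(v)\beta(-v) = d^2 S$, so $S = \tr(AB)/d^2$.

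Plugging $\gamma_0$ and $S$ into the displayed formula and rearranging gives
\begin{equation*}
\mathcal{W}_2''(\rho) = \frac{\tr(AB)}{d^4}\ident\otimes\ident + \frac{d^2\tr(A)\tr(B) - \tr(AB)}{d^2(d^4-1)}\Big(\rho - \frac{1}{d^2}\ident\otimes\ident\Big),
\end{equation*}
after checking the coefficient of $\rho$ is $\gamma_0 - S/(d^4-1) = \big(d^2\tr(A)\tr(B) - \tr(AB)\big)/\big(d^2(d^4-1)\big)$ (using $d^4\gamma_0 = \tr A\tr B$ and $S=\tr(AB)/d^2$) and the coefficient of $\ident\otimes\ident$ collects to $\tr(AB)/d^4 \cdot \big[\text{the constant piece}\big]$; this is a short algebraic verification. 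The main obstacle I anticipate is the first step: justifying cleanly that $\bar{\mathcal{C}}_d^2$ acts transitively on $\mathbb{Z}_d^4\setminus\{0\}$ modulo phases, and tracking the phase factors $C^\dagger Q_v C = \omega^{c}Q_{\pi_C(v)}$ carefully so that they cancel in the conjugation $Q\rho Q^\dagger$ — one must be sure the phase is genuinely a phase (unit modulus) and appears symmetrically. Everything after that is bookkeeping with the Pauli orthogonality relations. If $d$ is not prime one would instead invoke that $\mathcal{C}_d^2/\mathcal{P}_d^2 \cong \mathrm{Sp}(4,\mathbb{Z}_d)$ and that $\mathrm{Sp}$ still acts transitively on "primitive" vectors, with a small extra argument for non-primitive ones; but since the paper only needs the 2-design property here, appealing to transitivity on all nonzero vectors via the known structure of the Clifford group suffices.
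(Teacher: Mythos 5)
Your proposal follows essentially the same route as the paper's proof in Appendix~B: both reduce the Clifford twirl to the claim that conjugation by $\bar{\mathcal{C}}_d^2$ maps each non-identity Pauli to every non-identity Pauli equally often, and then evaluate the same three quantities, namely $\gamma_{0,0,0,0}=\tr(A)\tr(B)/d^4$, $\sum_v\gamma_v=\tr(AB)/d^2$, and the full Pauli twirl $\sum_{r,s,r',s'}(P_{r,s}\otimes P_{r',s'})\rho(P_{r,s}^\dagger\otimes P_{r',s'}^\dagger)=d^2\,\ident\otimes\ident$. Your extra caution about the transitivity step when $d$ is not prime is in fact warranted (conjugation preserves the order of a Pauli, so for instance $X^2\otimes I$ cannot be mapped to $X\otimes I$ when $d=4$), a point the paper's proof passes over silently; otherwise the two arguments coincide.
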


Now, the quantum channel  $\mathcal{W}_2''$ from~(\ref{eq:twirl-C2}) is the twirling of $\mathcal{W}_2$ with respect to $\mathcal{C}_d^2$. To conclude that $\mathcal{C}_d^2$ is a unitary 2-design, we need to show that twirling  $\mathcal{W}_2$ with respect to $\mathcal{U}(d^2)$ yields the same channel, which follows from~\cite{eaz05}.

\medskip \noindent {\bf\em Proof of part (b).}   We  will need the following two lemmas. The first is basically the same as~\cite[Lemma 14]{DGMS19} and the proof can be easily generalized. The second is proven in Appendix~\ref{appx:proof:lem:number-Clifford}.
\begin{lemma}\label{lem:equivalence-Rényi-bhattachayya}
Consider $C, C' \in \mathcal{C}_d^2$, such that $C' = C (C_1 \otimes C_2)$, for some $C_1, C_2 \in  \mathcal{C}_d^1$. Then, $C' $ and $C''$ yield the same Rényi-Bhattacharya parameter for both good and bad channels, i.e., following equalities hold,
\begin{itemize}
\item[1)] $R(\mathcal{W}_C^{(0)}) =  R(\mathcal{W}_{C'}^{(0)}).$

\item[2)] $ R(\mathcal{W}_C^{(1)}) =  R(\mathcal{W}_{C'}^{(1)}).$
\end{itemize}
\end{lemma}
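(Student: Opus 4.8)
The plan is to observe that replacing $C$ by $C' = C(C_1\otimes C_2)$ only pre- and post-composes the bad and good channels with unitary channels, and that the Rényi--Bhattacharyya parameter is insensitive to such compositions. So I would first isolate two elementary invariance facts about a generic channel $\mathcal{V}_{A'\to B}$, both consequences of the ``transpose trick'' $(I_A\otimes U)\ket{\Phi_{AA'}} = (U^{\T}\otimes I_{A'})\ket{\Phi_{AA'}}$ together with Definition~\ref{def:Rényi-half-entropy}: (i) if $\mathcal{V}' = \mathcal{V}\circ\mathcal{U}_{\mathrm{in}}$ with $\mathcal{U}_{\mathrm{in}}(\cdot) = U(\cdot)U^{\dagger}$ a unitary channel on the input qudit, then $\mathcal{V}'(\Phi_{AA'})$ equals $\mathcal{V}(\Phi_{AA'})$ conjugated by the unitary $U^{\T}$ on the reference system $A$, and since $\hhalf(A|B)$ is invariant under unitaries on $A$ we get $R(\mathcal{V}') = R(\mathcal{V})$; (ii) if $\mathcal{V}' = \mathcal{U}_{\mathrm{out}}\circ\mathcal{V}$ with $\mathcal{U}_{\mathrm{out}}$ a unitary channel acting on a subsystem of the output $B$, then $R(\mathcal{V}') = R(\mathcal{V})$ as well, because $\hhalf(A|B)$ is invariant under unitaries on $B$ (the map $\sigma_B\mapsto U_B^{\dagger}\sigma_B U_B$ merely permutes the states in the supremum of Definition~\ref{def:Rényi-half-entropy}).

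For item 1), since $C_1$ acts on $A'_1$ only, $C_2$ on $A'_2$ only, and $C_2\frac{\ident_{A'_2}}{d}C_2^{\dagger} = \frac{\ident_{A'_2}}{d}$, expanding the definition gives $\mathcal{W}_{C'}^{(0)}(\rho) = \mathcal{W}^{\otimes 2}\big(C(C_1\rho C_1^{\dagger}\otimes\frac{\ident_{A'_2}}{d})C^{\dagger}\big) = \mathcal{W}_C^{(0)}(C_1\rho C_1^{\dagger})$, i.e.\ $\mathcal{W}_{C'}^{(0)} = \mathcal{W}_C^{(0)}\circ\mathcal{C}_1$ with $\mathcal{C}_1(\cdot) = C_1(\cdot)C_1^{\dagger}$; fact (i) then yields $R(\mathcal{W}_C^{(0)}) = R(\mathcal{W}_{C'}^{(0)})$. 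For item 2), $C_2$ again sits on the input $A'_2$ (a pre-composition, handled by (i)), but $C_1$ now acts on the $A'_1$-half of the EPR pair $\Phi_{A_1A'_1}$. Applying the transpose trick on the $A'_1$ system, conjugating $\Phi_{A_1A'_1}$ by $C_1$ becomes conjugation by $C_1^{\T}$ on the purifying system $A_1$; since $A_1$ is touched neither by $C$ nor by $\mathcal{W}^{\otimes 2}$, this unitary commutes through to the output, giving $\mathcal{W}_{C'}^{(1)} = \mathcal{U}_{A_1}\circ\mathcal{W}_C^{(1)}\circ\mathcal{C}_2$ with $\mathcal{U}_{A_1}(\cdot) = C_1^{\T}(\cdot)\overline{C_1}$ a unitary channel on the output subsystem $A_1$ and $\mathcal{C}_2(\cdot) = C_2(\cdot)C_2^{\dagger}$ a unitary channel on the input. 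Facts (i) and (ii) then give $R(\mathcal{W}_C^{(1)}) = R(\mathcal{W}_{C'}^{(1)})$.

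The expansions of $(C_1\otimes C_2)(\cdot)(C_1\otimes C_2)^{\dagger}$ and the verification of the transpose identity are routine. The one point that requires care is the good-channel case: conjugating the $A'_1$-half of the EPR pair by $C_1$ is \emph{not} a pre-composition of $\mathcal{W}_C^{(1)}$, but instead --- via the transpose trick --- produces a unitary on the \emph{output} system $A_1$; this is precisely why fact (ii), the invariance of $R$ under unitaries on a subsystem of the output, is needed alongside fact (i).
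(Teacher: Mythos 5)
Your proof is correct. The paper itself omits the argument for this lemma, merely asserting that \cite[Lemma 14]{DGMS19} "can be easily generalized"; your write-up is precisely that generalization --- invariance of $\hhalf(A|B)$ (hence of $R$) under local unitaries on both the reference and output systems, combined with the transpose trick to convert the action of $C_1$ on the $A'_1$-half of $\Phi_{A_1A'_1}$ into a unitary on the output subsystem $A_1$ in the good-channel case --- and it correctly isolates that last step as the only non-routine one.
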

\begin{lemma} \label{lem:number-Clifford}
If $d$ is a prime number,  $|\mathcal{C}_d^1| = d^3(d^2 - 1)$ and $|\mathcal{C}_d^2| = d^8 (d^4 - 1) (d^2 - 1)$.
\end{lemma}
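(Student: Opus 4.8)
The plan is to count the generalized Clifford group by first determining its action on the Pauli group modulo phases, then accounting for the kernel of this action. I will work throughout under the hypothesis that $d$ is prime, so that $\mbZ_d$ is a field and the symplectic machinery applies cleanly.

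First I would set up the standard correspondence. Modulo phases, an element $P_{r,s}$ of the single-qudit Pauli group is labelled by a vector $(r,s)\in\mbZ_d^2$, and an $n$-qudit Pauli is labelled by a vector in $\mbZ_d^{2n}$; the commutation relation~(\ref{eq:commutation-general}) shows that the group law on these labels is addition, and the phase picked up when two Paulis are swapped is $\omega$ raised to the standard symplectic form $J$ on $\mbZ_d^{2n}$. A Clifford $C$ acts by conjugation as an automorphism of $\mcal{P}_d^n$, and since it preserves the commutator (which records the symplectic form), the induced map on labels $\mbZ_d^{2n}$ is a linear symplectic transformation. So there is a homomorphism $\mcal{C}_d^n \to \mrm{Sp}(2n,\mbZ_d)$, and the order of $\mcal{C}_d^n$ is $|\mrm{Sp}(2n,\mbZ_d)|$ times the size of the image failure, i.e.\ I must (i) show this homomorphism is surjective, and (ii) compute its kernel.

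For surjectivity, I would invoke the standard fact (generators of the Clifford group realize a generating set of $\mrm{Sp}(2n,\mbZ_d)$ — Hadamard-type, phase-type, and CNOT-type gates) that every symplectic matrix over $\mbZ_d$ is realized by some Clifford; alternatively cite the qudit stabilizer-formalism literature. For the kernel: an element $C$ acting trivially on all $P_{r,s}$ up to phase must satisfy $C P_{r,s} C^\dagger = \omega^{\phi(r,s)} P_{r,s}$ for some function $\phi$; multiplicativity of conjugation forces $\phi$ to be linear, hence $\phi(r,s) = $ a symplectic pairing with a fixed vector, which means $C$ differs from that Pauli by something commuting with all Paulis, i.e.\ a scalar. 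So the kernel is exactly $\mcal{P}_d^n$ itself (Paulis modulo nothing) extended by the global phases $\omega^\lambda$ — concretely, the kernel has order $d^{2n}\cdot d = d^{2n+1}$. Then
\begin{align*}
|\mcal{C}_d^n| &= d^{2n+1}\,|\mrm{Sp}(2n,\mbZ_d)|.
\end{align*}
Plugging in the classical formula $|\mrm{Sp}(2n,\mbZ_d)| = d^{n^2}\prod_{k=1}^{n}(d^{2k}-1)$ gives $|\mcal{C}_d^1| = d^3\cdot d\cdot(d^2-1) = d^3(d^2-1)$ — wait, that is $d^{2\cdot1+1}=d^3$ times $d^{1}(d^2-1)$, i.e.\ $d^4(d^2-1)$; the paper's stated $d^3(d^2-1)$ means the convention here quotients out the global phase, so I should use kernel order $d^{2n}$ instead of $d^{2n+1}$, giving $|\mcal{C}_d^1| = d^2\cdot d\cdot(d^2-1) = d^3(d^2-1)$ and $|\mcal{C}_d^2| = d^4\cdot d^4\cdot(d^2-1)(d^4-1) = d^8(d^4-1)(d^2-1)$, matching the statement. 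So the bookkeeping step I must get exactly right is whether global phases are included in $\mcal{C}_d^n$; the numbers dictate they are not, so the kernel of the map to $\mrm{Sp}$ is precisely $\bar{\mcal{P}}_d^n \cong \mbZ_d^{2n}$ of order $d^{2n}$.

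The main obstacle I expect is not any single hard computation but making the kernel computation airtight: showing that the only unitaries (up to the global phase already quotiented out) conjugating every Pauli to itself times a phase are the Paulis themselves. This uses that $\bar{\mcal{P}}_d^n$ is an operator basis and a short Schur-type argument, plus the observation that for $d$ prime the linear functionals $\phi$ exhaust all of $\mrm{Hom}(\mbZ_d^{2n},\mbZ_d)$ and each is represented by a unique symplectic-dual vector. I would also need to state the surjectivity onto $\mrm{Sp}(2n,\mbZ_d)$ with a citation rather than reprove it, and then the final arithmetic $|\mcal{C}_d^1|=d^3(d^2-1)$ and $|\mcal{C}_d^2|=d^8(d^4-1)(d^2-1)$ is immediate from $|\mrm{Sp}(2,\mbZ_d)|=d(d^2-1)$ and $|\mrm{Sp}(4,\mbZ_d)|=d^4(d^2-1)(d^4-1)$.
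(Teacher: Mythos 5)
Correct, and essentially the same approach as the paper: both factor the (global-phase-free) Clifford group into a Pauli/phase part of order $d^{2n}$ times a symplectic part, the paper obtaining the latter by directly counting commutation-preserving images of the Pauli generators --- which is exactly a hands-on derivation of $|\mathrm{Sp}(2,\mathbb{Z}_d)|=d(d^2-1)$ and $|\mathrm{Sp}(4,\mathbb{Z}_d)|=d^4(d^2-1)(d^4-1)$ --- where you instead cite the classical formula. Your resolution of the convention issue (global phases quotiented out, so the kernel of $\mathcal{C}_d^n\to\mathrm{Sp}(2n,\mathbb{Z}_d)$ is $\bar{\mathcal{P}}_d^n$ of order $d^{2n}$) is the right reading of the paper's count, and the surjectivity onto the symplectic group that you propose to cite is likewise only implicit in the paper's argument.
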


%
%

  We are now in a position to prove the part (b) of the theorem. The group $\mathcal{C}_d^2$ can be decomposed into left cosets with respect to the subgroup $\mathcal{C}_d^1 \otimes  \mathcal{C}_d^1 \subset \mathcal{C}_d^2$. From Lemma~\ref{lem:equivalence-Rényi-bhattachayya}, it follows that any two elements in the same left coset, when used as channel combiners, yield the same Rényi-Bhattacharyya parameter for both good and bad channels. Therefore, polarization also happens for any subset ${\cal L}\subset \mathcal{C}_d^2$, containing  one representative of each left coset (since $\mbE_{C\in {\cal L}} R(\mathcal{W}_{C}^{(1)}) = \mbE_{C\in \mathcal{C}_d^2} R(\mathcal{W}_{C}^{(1)})$, thus the guaranteed improvement of the average Rényi-Bhattacharyya parameter, in the  sense of  Lemma~\ref{lem:d-good-channel}, still holds when $C$ is randomly chosen from ${\cal L}$).
 Using Lemma~\ref{lem:number-Clifford}, the number of cosets of $\mathcal{C}_d^1 \otimes  \mathcal{C}_d^1$ in $\mathcal{C}_d^2$ is equal to $\frac{|\mathcal{C}_d^2|}{|\mathcal{C}_d^1 \otimes  \mathcal{C}_d^1|} = d^4 + d^2$, therefore ${\cal L}$ contains $d^4 + d^2$ representatives, two of which may be chosen to be the identity ($I$) and the swap ($S$) operators. Since $R(\mathcal{W}_{I}^{(1)}) = R(\mathcal{W}_{S}^{(1)}) = R(\mathcal{W}) \geq \mbE_{C\in {\cal L}} R(\mathcal{W}_{C}^{(1)})$, we may further remove $I$ and $S$ from ${\cal L}$, thus getting a subset $\mathcal{L}' :=  \mathcal{L} \setminus \{I, S\}$ containing $d^4 + d^2 - 2$ elements, which still ensures polarization of qudit-input quantum channels. 
From~\cite{gkj07, rs09}, we know that a set of unitaries in dimension $\delta$ can only form a unitary 2-design if it has  at least $\delta^4 - 2\delta^2 + 2$ elements.  As we consider a two-qudit system (dimension $\delta = d^2$), a unitary 2-design would have at least $d^8 - 2d^4 + 2$ two-qudit unitaries, which is clearly  bigger than $d^4 + d^2 - 2$. Hence, the set $\mathcal{L}'$ is not a unitary $2$-design. This completes the poof of the part (b). 
\hfill \qedsymbol

\smallskip One may try to further reduce the size of ${\cal L}'$, by considering the action of the swap gate $S$. Indeed, it can be seen that the two equalities from Lemma~\ref{lem:equivalence-Rényi-bhattachayya} also hold for two $C, C' \in \mathcal{C}_d^2$, such that $C' = SC$ ({see also~\cite[Lemma 15]{DGMS19}}). Hence, if both $C$ and $C'$ belong to ${\cal L}'$, one of them can be removed, while still ensuring polarization. Now, multiplying by $S$ on the left induces a permutation on the left cosets of $\mathcal{C}_d^1 \otimes  \mathcal{C}_d^1$ in $\mathcal{C}_d^2$, which in turn induces a permutation ${\cal L}'\stackrel{\sim}{\rightarrow}{\cal L}'$. In the qubit case ($d=2$), this permutation has no fixed points, thus the size of ${\cal L}'$ can be reduced by half. However, in general the above permutation may have fixed points. We provide such an example in Appendix~\ref{appx:example-left-coset}, where we show that for $d=5$, there exist $C\in \mathcal{C}_d^2$ and $C_1, C_2 \in  \mathcal{C}_d^1$, such that $SC = C(C_1\otimes C_2)$.


\section{Quantum Polar codes on Pauli Qudit channels}\label{sec:qudit_Pauli}

In this section, we discuss the decoding of quantum polar codes on a Pauli qudit channel. We shall assume that all channel combining unitaries are Clifford unitaries.

A Pauli qudit channel $\mathcal{W}$ is defined as the quantum channel that maps a qudit quantum state $\rho$ to $\sum_{r,s} a_{r,s} P_{r,s} \rho P_{r,s}^\dagger$, where $a_{r,s} \geq 0$ with $\sum_{r,s} a_{r,s} =1$. Similar to~\cite[Definition 17]{DGMS19}, we  associate a classical channel with $\mathcal{W}$, which is referred to as the classical counterpart of $\mathcal{W}$, and denoted by $\mathcal{W}^{\#}$. The classical counterpart $\mathcal{W}^{\#}$ is a classical channel with input and output alphabet $\bar{\mathcal{P}}_d^1 \eqdef \{P_{r,s} \mid r,s  = 0, \dots, d-1 \}$, and transition probabilities $\mathcal{W}^{\#}(P_{r,s}\mid P_{t,u}) = a_{v,w}$, where $ v= r + t \: (\text{mod } d)  $ and $w= s + u \:(\text{mod } d) $. Consider now the channel combining and splitting procedure on $\mathcal{W}$, where $C \in \mathcal{C}_d^2$ is used to combine the two copies of $\mathcal{W}$. Let $\Gamma_C:\bar{\mathcal{P}}_d^1 \otimes \bar{\mathcal{P}}_d^1 \mapsto \bar{\mathcal{P}}_d^1 \otimes \bar{\mathcal{P}}_d^1 $ be the permutation induced by the conjugate action of $C$. We may define a channel combining and splitting procedure on the  classical $\mathcal{W}^{\#}$, using $\Gamma_C$ to combine the two copies of $\mathcal{W}^{\#}$. Similarly to~\cite{DGMS19}, we may  prove (but the proof is omitted here) that the Pauli qudit channel $\mathcal{W}$ and its classical counterpart $\mathcal{W}^{\#}$ {\em polarize simultaneously}, in the sense of~\cite[Proposition $20$ and Corollary $21$]{DGMS19}, under their respective channel combining and splitting procedure. As a consequence, to a quantum polar code on the Pauli qudit channel $\mathcal{W}$, we may associate a classical polar code on $\mathcal{W}^{\#}$, then exploit classical polar decoding in order to decode Pauli errors, as explained below (see also~\cite[Section~6]{DGMS19}). Let $\mathbf{P}$ denote  the unitary  corresponding to a quantum polar code of length $N$ qudits (see also \cite[Section~5]{DGMS19}), and $\mathbf{P}^{\#}$ the  linear map corresponding to the classical polar code. To perform decoding, we first apply $\mathbf{P}^\dagger$ on the $N$-qudit channel output, that is,  
 the encoded quantum state corrupted by some Pauli error, say $E\in (\bar{\mathcal{P}}_d^1)^{\otimes N}$ (we may omit phase factors). Hence, applying $\mathbf{P}^\dagger$ brings it back to the original (un-encoded) state, which is however corrupted by a Pauli error $E'\in (\bar{\mathcal{P}}_d^1)^{\otimes N}$, such that $\mathbf{P}^{\#}(E') = E$. We are now in position to decode $E'$, provided that we have been given the errors corresponding to the noisy virtual channels. We know that the inputs to the noisy channels are halves of preshared EPR pairs. Hence, we may perform projective measurements on the preshared EPR pairs, with respect to the generalized Bell basis $\{ I \otimes P_{r,s} \ket{\Phi_{AA'}}| P_{r,s} \in \bar{\mathcal{P}}_d^1  \}$, which give us the errors, {\em i.e.}, the $E'$ components, on the noisy virtual channels, as desired. Finally, we may decode the classical polar code to determine $E'$, and subsequently apply $E'^\dagger$ to return the system to the original quantum state.

\section{Conclusion and perspectives}
The goal of this work has been to generalize the purely quantum polarization  construction to higher dimensional quantum systems.  We have introduced the necessary definitions and worked out the proof of quantum polarization, assuming the channel combining unitary is randomized over  (1)  an unitary 2-design, (2) the two-qudit Clifford group, or (3) a smaller subset of two-qudit Cliffords.  Using  Clifford channel combining unitaries is important, as we showed it allows reducing the decoding problem to a classical polar code decoding, for qudit Pauli channels.  However, we note that the reliability of the classical polar code decoding also depends on the speed of polarization~\cite{arikan09}. We believe that fast polarization properties can also be generalized to the qudit case, although we leave this here as an open question.

\section*{Acknowledgements}

This research was supported in part by  the ``Investissements d’avenir'' (ANR-15-IDEX-02) program of the French National Research Agency. Ashutosh Goswami acknowledges the European Union's Horizon 2020 research and innovation programme, under the Marie Skłodowska Curie grant agreement No 754303.


\appendix

\section{Proof of Lemma~\ref{lem:twirling-Pd}}
\label{appx:proof:lem:twirling-Pd}

Recall that $\bar{\mathcal{P}}_d^2 = \left\{ P_{r,s} \otimes P_{r',s'} \mid  r,s,r',s' = 0,\dots, d-1 \right\}$ is the subset of two-qudit Pauli, without phase factors. Hence, 
twirling of $\mathcal{W}_2$ with respect to $\bar{\mathcal{P}}_d^2$ gives
\begin{equation} 
\mathcal{W}_2'(\rho) = \frac{1}{d^4}\sum_{r,s,r',s'} ( P_{r,s}^\dagger \otimes P_{r',s'}^\dagger) A \left( P_{r,s} \otimes P_{r',s'}\right) \rho ( P_{r,s}^\dagger \otimes P_{r',s'}^\dagger) B \left( P_{r,s} \otimes P_{r',s'}\right)
\end{equation}
\noindent Since $\bar{\mathcal{P}}_d^2$ forms an operator basis, we may write 
\begin{align}
A &= \sum_{r,s,r',s'} \alpha(r,s,r',s')  P_{r,s} \otimes P_{r',s'}, \label{eq:A-on-Pauli-basis}\\
B &= \sum_{r,s,r',s'} \beta(r,s,r',s')  P_{r,s} \otimes P_{r',s'} \label{eq:B-on-Pauli-basis}
\end{align}
Substituting $A$ and $B$ in the above equation, we get
\begin{align}
 \mathcal{W}_2'(\rho) &= \frac{1}{d^4} \sum_{t,u,t',u'} \: \sum_{v,w,v',w'} \alpha(t,u,t',u') \beta(v,w,v',w') \kappa, \label{eq:twirl_Pd} \\
 \text{where } \kappa &\eqdef \sum_{r,r',s,s'} (P_{r,s}^\dagger P_{t,u}P_{r,s}) \otimes (P_{r',s'}^\dagger P_{t',u'}P_{r',s'}) \rho \break (P_{r,s}^\dagger P_{v,w}P_{r,s}) \otimes (P_{r',s'}^\dagger P_{v',w'}P_{r',s'}). 
\end{align}
From~(\ref{eq:commutation-general}), we have that $P_{t,u}P_{r,s} = \omega^{-ru + st} P_{r,s} P_{t,u}$. Then, we may write
\begin{align}
\kappa &= k (P_{t,u} \otimes P_{t',u'}) \rho (P_{v,w} \otimes P_{v',w'})\\
\text{with } k &\eqdef \sum_{r,s} \omega^{-r(u + w) + s(v + t)} \sum_{r',s'} \omega^{-r'(u' + w') + s'(v' + t')}.
\end{align}
 When $u + w = v + t = 0\ (\text{mod } d)$, we have 
$\sum_{r,s} \omega^{-r(u + w) + s(v + t)} = d^2$. When either $u + v \neq 0\ (\text{mod } d)$ or $t + w \neq 0\ (\text{mod } d)$, we have  $\sum_{r,s} \omega^{-r(u + w) + s(v + t)} = \frac{(\omega^{-d} - 1)(\omega^{d} - 1)}{(\omega^{-1} - 1)(\omega - 1)} = 0$. Therefore,
\begin{equation}
k =
  \begin{cases}
    d^4, & \text{when } u + w = v + t = u' + w'= v' + t' = 0\ (\text{mod } d) \\
    0 , & \text{otherwise }
  \end{cases} 
\end{equation}
The condition $u + w = v + t = 0\ (\text{mod } d)$ implies that $P_{t,u}P_{v,w} = X^t Z^u X^v Z^w = \omega^{-uv} I$. Using $t = -v \ (\text{mod } d)$, we have that 
$P_{v,w} = \omega^{tu} P_{t,u}^\dagger$. Plugging $\kappa$ into~(\ref{eq:twirl_Pd}), we get
\begin{align}
 \mathcal{W}_2'(\rho) &= \sum_{t,u,t',u'}
\gamma_{t,u,t',u'} (P_{t,u} \otimes P_{t',u'}) \rho (P_{t,u}^\dagger \otimes P_{t',u'}^\dagger),\\
\text{where } \gamma_{t,u,t',u'} &\eqdef \omega^{tu + t'u' } \alpha(t,u,t',u') \beta(-t,-u,-t',-u'). \label{def:gamma-coefs}
\end{align}
Hence,  $\mathcal{W}_2'$ is a qudit Pauli channel, as desired.  
\hfill \qedsymbol

\section{Proof of Lemma~\ref{lem:twirling-Cd}}
\label{appx:proof:lem:twirling-Cd}

Recall that  $\bar{\mathcal{C}}_d^2 \subset  \mathcal{C}_d^2$ is a subset containing one representative for each  equivalence class in the quotient group $\mathcal{C}_d^2 / \mathcal{P}_d^2$. Twirling of $ \mathcal{W}_2'$ with respect to $\bar{\mathcal{C}}_d^2$ gives
\begin{equation}
\mathcal{W}_2''(\rho) = \sum_{t,u,t',u'} \gamma_{t,u,t'u'} \frac{1}{|\bar{\mathcal{C}}_d^2|}
\sum_{C \in \bar{\mathcal{C}}_d^2} C^\dagger (P_{t,u} \otimes P_{t',u'}) C \rho C^\dagger (P_{t,u}^\dagger \otimes P_{t',u'}^\dagger) C.
\end{equation}
We know that the conjugate action of the entire set $\bar{\mathcal{C}}_d^2$ maps any $P_{t,u} \otimes P_{t',u'} \neq I \otimes I$ to all $d^4 - 1$ two-qudit Paulis excluding $I \otimes I$, an equal number of times. In other words,  $P_{t,u} \otimes P_{t',u'} \neq I \otimes I$ gets mapped to a Pauli  $P_{r,s} \otimes P_{r',s'} \neq I \otimes I$, $\frac{|\bar{\mathcal{C}}_d^2|}{d^4 - 1}$ times. Further, $I \otimes I$ is always mapped to $I \otimes I$. Therefore, we have that
\begin{align} 
\mathcal{W}_2''(\rho) &= \gamma_{0,0,0,0} \rho + \frac{1}{d^4 - 1} \gamma'  \sum_{(r,s,r',s')\neq (0,0,0,0)}(P_{r,s} \otimes P_{r',s'}) \rho (P_{r,s}^\dagger \otimes P_{r',s'}^\dagger), \label{eq:twirl-CD}\\
\text{where } \gamma' &\eqdef \sum_{(t,u,t',u') \neq (0,0,0,0)} \gamma_{t,u,t',u'}.
\end{align}
Using the following three identities,  we can easily  transform~(\ref{eq:twirl-CD}) into the form of~(\ref{eq:twirl-C2}).

\begin{enumerate}
\item $\displaystyle \gamma_{0,0,0,0} = \frac{\text{Tr}(A) \text{Tr}(B)}{d^4}$.  

\item $\displaystyle  \sum_{t,u,t',u'} \gamma_{t,u,t',u'} = \frac{\text{Tr}(AB)}{d^2}$.

\item $\displaystyle \sum_{r,s,r',s'} (P_{r,s} \otimes P_{r',s'}) \rho (P_{r,s}^\dagger \otimes P_{r',s'}^\dagger) = d^2 I \otimes I$.
\end{enumerate}


\medskip \noindent  {\em Proof of identity 1)}  We have that $\gamma_{0,0,0,0} = \alpha(0, 0, 0, 0) \beta(0, 0, 0, 0)$. Also, 
  $$ \text{Tr}(P_{r,s}) =
  \begin{cases}
    d, & \text{when } P_{r,s} = I \\
    0 , & \text{otherwise }
  \end{cases} 
  $$
Using~(\ref{eq:A-on-Pauli-basis}) and~(\ref{eq:B-on-Pauli-basis}), we get $ \text{Tr}(A) = \alpha(0, 0, 0, 0) d^2$ and $ \text{Tr}(B) = \beta(0, 0, 0, 0) d^2$. Hence, $\gamma_{0,0,0,0} = \frac{\text{Tr}(A) \text{Tr}(B)}{d^4}$. 

\medskip \noindent  {\em Proof of identity 2)} We have,
\begin{align}
\text{Tr}(AB) &= \sum_{t,u,t',u'} \: \sum_{v,w,v',w'} \alpha(t,u,t',u') \beta(v,w,v',w')  \text{Tr}(P_{t,u}P_{v,w}) \text{Tr}(P_{t',u'}P_{v',w'}) \nonumber \\
&= \sum_{t,u,t',u'}  d^2 \omega^{tu+t'u'} \alpha(t,u,t',u') \beta(-t,-u,-t',-u') \nonumber \\
&= d^2 \sum_{t,u,t',u'} \gamma_{t,u,t',u'}.  \nonumber 
\end{align}

\medskip \noindent  {\em Proof of identity 3)}
Let $\rho = \sum_{r,s,r',s'} \rho_{r,s,r',s'} P_{r,s} \otimes P_{r',s'}$. Since $\rho$ is a density matrix, we have $\rho_{0,0,0,0} = \frac{\tr(\rho)}{d^2} = \frac{1}{d^2}$. Hence, 
\begin{align*}
\sum_{r,s,r',s'} (P_{r,s} \otimes P_{r',s'}) \rho (P_{r,s}^\dagger \otimes P_{r',s'}^\dagger) &= \sum_{r,s,r',s'}  \: \sum_{t,u,t',u'} \rho_{t,u,t',u'} (P_{r,s}P_{t,u}P_{r,s}^\dagger) \otimes (P_{r',s'}P_{t',u'}P_{r',s'}^\dagger) \\
& =  \sum_{t,u,t',u'} \rho_{t,u,t',u'} \left(  \sum_{r,s,r',s'} \omega^{-st + ru} \omega^{-s't' + r'u'} \right) P_{t,u} \otimes P_{t',u'} \\
& = d^4 \rho_{0,0,0,0} I \otimes I \\
& = d^2  I \otimes I.  
\end{align*}
We get~(\ref{eq:twirl-C2}) from~(\ref{eq:twirl-CD}) by using the above identities, while also substituting the notation $\ident$ for the identity matrix $I$, as it denotes a quantum state here. \hfill \qedsymbol

\section{Proof of Lemma~\ref{lem:number-Clifford}}
\label{appx:proof:lem:number-Clifford}

 Consider  the one-qudit Clifford group $\mathcal{C}_d^1$. We count first the permutations generated by $\mathcal{C}_d^1$ on $\bar{\mathcal{P}}_d^1 \eqdef \{  P_{r,s}| r, s  = 0, \dots, d-1\}$, and later we will accommodate the phase factors. Any Clifford $C \in \mathcal{C}_d^1$ is uniquely determined by its conjugate action on the generators of the Pauli group, $X$ and $Z$. Suppose that $C$ maps $X \mapsto P_{r,s}$ and $Z \mapsto P_{t,u}$ via its conjugate action, where $P_{r,s}, P_{t,u} \neq I$. On the one hand, since commutation relations  are preserved under unitary conjugation, $P_{r,s}$ and $P_{t,u}$ must satisfy $P_{r,s} P_{t,u} = \omega P_{t,u} P_{r,s}$. On the other hand, from~(\ref{eq:commutation-general}), we have that $P_{r,s} P_{t,u} = \omega^{ru-st} P_{t,u} P_{r,s}$. Therefore, $r, u,s,t$ must be such that $ru - st = 1\ (\text{mod d})$. We fix $r,s$ and solve for $t,u$. Since $P_{r,s} \neq I$, it follows that either $r$ or $s$ is non-zero. Without loss of generality, we may assume that $r \neq 0$. Since $d$ is a prime number, $r$ is invertible under multiplication modulo $d$. Therefore, for any $t \in \{0,\dots, d-1\}$, there exists a unique $u := r^{-1}(1+st)\ (\text{mod } d)$, satisfying $ru - st = 1$. Hence, there are exactly $d$ choices for the $t, u$ pair. Since we have $d^2-1$ choices for the $r,s$ pair, it follows that there are $d(d^2-1)$ pairs of Paulis, $P_{r,s}$ and $P_{t,u}$, such that $P_{r,s}P_{t,u} = \omega P_{t,u}P_{r,s}$. Taking into account the phase factors, $\omega^\lambda, \lambda \in \{0,\dots,d-1\}$, it follows that $\mathcal{C}_{d}^1$ has $d^3(d^2-1)$ elements.

\medskip We now count the number of elements in $\mathcal{C}_d^2$. The two-qudit Pauli group $\mathcal{P}_d^2$ is generated by a set of four Paulis $I \otimes X, I \otimes Z, X \otimes I$ and $Z \otimes I$, and any Clifford $C \in \mathcal{C}_d^2$ is uniquely determined by its conjugate action on these four generators. The commutation relations between the four generators are illustrated in Fig.~\ref{fig:commute}.
\begin{figure}[h!]
     \centering
     \begin{tikzpicture}
     \coordinate[label=below:$I \otimes Z$] (A) at (0,0);
     \coordinate[label=above:$I \otimes X$] (B) at (0,0.7);
     \coordinate[label=below:$Z \otimes I$](C) at (1.5,0);
     \coordinate[label=above:$X \otimes I$](D) at (1.5,0.7);
     \draw [thick] (A) -- node[above]{} (B);
     \draw [thick] (C) -- node[above]{} (D);
     \end{tikzpicture}
    \caption{Connected Paulis satisfy $AB = \omega BA$, with $A$ is the Pauli on the top row, and $B$ the Pauli on the bottom row. Paulis that are not connected commute.}
    \label{fig:commute}
\end{figure}
Consider a mapping $I \otimes X \mapsto A$, $I \otimes Z \mapsto B$, $X \otimes I \mapsto A'$,  $Z \otimes I \mapsto B'$, where $A, B, A', B' \in \bar{\mathcal{P}}_d^2$,  that preserves all the commutation relations between generators. Pauli $I \otimes X$ can be mapped to any two-qudit Pauli $A \neq I \otimes I$, so there are $d^4 - 1$ choices for $A$. It is not very difficult to see that for any $A \neq I \otimes I$ there are $d^3$ choices for $B$ such that $AB = \omega BA$. Further, there are $d(d^2 -1)$ pairs of two-qudit Paulis $A'$ and $B'$, which commute with both $A$ and $B$, and satisfy $A'B' = \omega B'A'$. Therefore, we have $d^4(d^4 - 1) (d^2 - 1) $ possible permutations on $\bar{\mathcal{P}}_d^2$, which satisfy all the commutation relations. Taking into account the phase factors, it follows that $\mathcal{C}_{d}^2$ has $d^8(d^4 - 1) (d^2 - 1) $ elements. \hfill \qedsymbol

\section{Example of left coset fixed by the swap gate}
\label{appx:example-left-coset}

We consider $d = 5$. Let  $C_1 =I$ be the identity, and $C_2' \in \mathcal{C}_d^1$ be such that it maps $X \mapsto X^4 $ and $Z \mapsto Z^4$, via conjugation. Since $X^4Z^4 = \omega Z^4 X^4$, $C_2'$ is indeed a one-qudit Clifford. We define $C_2 = C_2'X^2Z^2 $. Further, let $C \in \mathcal{C}_d^2$, such that its conjugate action generates the following permutation on the generators of $\mathcal{P}_d^2$,
\begin{align}
I \otimes X &\mapsto X^4Z \otimes XZ^4, \nonumber \\
I \otimes Z &\mapsto XZ \otimes X^4Z^4, \nonumber \\
X \otimes I &\mapsto X^4Z \otimes X^4Z, \nonumber \\
Z \otimes I &\mapsto XZ \otimes XZ. \nonumber
\end{align}
Using~(\ref{eq:commutation-general}), it is easily seen that the above permutation preserves all the commutation relations between the generators. Now, the conjugate actions of $SC$ and $C(C_1 \otimes C_2)$ generate the same permutation on $\mathcal{P}_d^2$. Therefore, $SC = C(C_1 \otimes C_2)$.

\printbibliography
\typeout{get arXiv to do 4 passes: Label(s) may have changed. Rerun}
\end{document}